\newtheorem{theorem}{Theorem}
\newtheorem{definition}{Definition}
\newtheorem{lemma}{Lemma}
\newtheorem{corollary}{Corollary}
\newtheorem{prop}{Proposition}
\newtheorem{remark}{Remark}
\newtheorem{assumption}{Assumption}
\def\scr#1{{\cal #1}}
\newcommand{\R}{\mathbb{R}}
\newcommand{\Z}{\mathbb{Z}}
\newcommand{\D}{\mathcal{D}}
\newcommand{\0}{\mathbf{0}}
\newcommand{\1}{\mathbf{1}}
\title{
Analysis, 
Online Estimation, and Validation of a Competing Virus Model}
\author{Philip E. Par\'{e}, Damir Vrabac, Henrik Sandberg, and Karl H. Johansson*\thanks{
* 
The authors are all with the Division of Decision and Control Systems at KTH Royal Institute of Technology  
and Philip E. Par\'{e} can be contacted at  (\texttt{philipar@kth.se}).
}}
\begin{document}
\maketitle

\begin{abstract}
In this paper we introduce a discrete time competing virus model and the assumptions necessary for the model to be well posed. 
We analyze the system exploring its different equilibria. 
We provide necessary and sufficient conditions for the estimation of the model parameters from time series data  
and introduce an online estimation algorithm. 
We employ a dataset of two competing subsidy programs from the US Department of Agriculture to validate the model by employing the identification techniques. 
To the best of our knowledge, this work is the first to study competing virus models in discrete-time, online identification of spread parameters from time series data, and validation of said models using real data. These new contributions are important for  applications since real data is naturally sampled.
\end{abstract}

\section{Introduction}

As the world becomes more connected via transportation networks, communication networks, social media, and others, society  become more susceptible to various types of attacks such as diseases, viruses, and misinformation (fake news). 
We have witnessed the massive impacts that the spread of misinformation can have, especially in political systems \cite{allcott2017social,bastos2019brexit}. Therefore, it is important to develop models that capture the behavior of spreading competing information to be able to design and implement mitigation techniques against fake news.

Competing virus models have been motivated in the literature by competing viral strains \cite{nowak1991evolution} and competing ideas spreading on different social networks \cite{sahneh2014competitive}, but they can also have broader applications to political stances, adoption of competing products, competing practices in farming, etc. 
Competing SIS virus models have been studied extensively in recent years \cite{nowak1991evolution,karrer2011competing,prakash2012winner,wei2013competing,sahneh2014competitive,santos2015bivirus,liu2016onthe,watkins2016optimal,bivirusTAC,xu2012multi,acc_multi,tnse_multi}. 
In \cite{nowak1991evolution}, the idea of modeling two competing viruses was introduced without any graph structure. The more recent works have included graph structure. The majority of this work has focused on the case of two competing viruses, sometimes referred to as the bi-virus model \cite{karrer2011competing,prakash2012winner,wei2013competing,sahneh2014competitive,santos2015bivirus,liu2016onthe,watkins2016optimal,bivirusTAC}. 
Some work has 
analyzed the equilibria of models of 
an arbitrary number of competing viruses \cite{xu2012multi,acc_multi,tnse_multi}. 
To the best of our knowledge all of the previous work on competing viruses has been done in continuous time. 

Discrete time models have been studied for the single virus model \cite{wang2003epidemic,chakrabarti2008epidemic,ahn2013global,usda_acc,dtjournal,arturo,prasse2018network,prasse2019viral}. 
In \cite{usda_acc,dtjournal,arturo,prasse2018network} identification of a single virus discrete time spread processes was investigated. In \cite{prasse2018network}, in addition to recovering the homogeneous spread parameters the authors studied recovering the network structure of the model, but no real data was employed. In \cite{usda_acc,dtjournal,arturo}, validation work was carried out using real data. One dataset used in \cite{usda_acc,dtjournal} was the adoption of two competing US Department of Agriculture (USDA) farm subsidy programs. We employ that dataset here but use a two-competing virus (bi-virus) model. The results show that the model fits the dataset much better than when using a single virus model. 

In many ways this paper is an extension of \cite{usda_acc,dtjournal}, generalizing from a single virus discrete model to multiple competing viruses. However, the proofs are different in several of the cases. 
New insights into the discrete time model are presented via simulations. 
Finally, the data results from \cite{usda_acc,dtjournal} are improved upon when using a two-competing virus model.

The paper is organized as follows: in Section \ref{sec:mod}, the competing virus spread model is introduced with accompanying assumptions that ensure the model is well posed, which is proven. 
In Section \ref{sec:analysis}, we analyze the model from Section \ref{sec:mod}. 
In Section \ref{sec:id}, we present necessary and sufficient conditions for learning, or identifying, the spread process parameters of the same model, from data produced by the models. In so doing, we establish several assumptions that need to be met by the USDA data. 
In Section \ref{sec:sim}, we validate the results from Sections \ref{sec:analysis} and \ref{sec:id} via simulation and present some exploratory simulations to support the work in Section \ref{sec:usda} and 
propose an online spread parameter estimation algorithm.
In Section \ref{sec:usda}, we learn the spread parameters of the USDA subsidy programs using data from different subsets of the country 
and verify the learned parameters by simulating the spread model over the complete 
United States and comparing the simulated data with the actual data. 
\subsection{Notation}

Given a vector function of continuous time $x$, $\dot{x}$ indicates the time-derivative. Given a vector function of discrete time $x[t]$, $t$ is the time index.
Given a vector $x \in \mathbb{R}^{n}$, the 2-norm is denoted by $\|x\|$ and the transpose by $x^{\top}$. 
The vector of all equal zeros is denoted by $\0$. 
Given two vectors $x^1,x^2\in \mathbb{R}^{n}$, $x^1 > x^2$ indicates each element of $x^1$ is greater than or equal to the corresponding element of $x^2$ and $x^1 \neq x^2$, and $x^1 \gg x^2$ indicates each element of $x^1$ is strictly greater than the corresponding element of $x^2$. 
Given a matrix $A \in \mathbb{R}^{n \times n}$, the spectral radius is $\rho(A)$.
Also, $a_{ij}$ indicates the $i, j^{th}$ entry of the matrix $A$, and $\| A \|_{F}$ indicates the  Frobenius norm of $A$. 
The notation $diag(\cdot)$ refers to a diagonal matrix with the argument(s) on the diagonal; the argument can be a vector $x$ or its elements $x_i$. 
For $n\in \Z^+$, $[n] := \{ 1, ... , n\}$.

\section{Competing Virus Model}\label{sec:mod}

We introduce a discrete-time multi-virus competing model. 
The model can be derived from the continuous-time model, where, for each virus $k\in [m]$, $x^k_i$ is the infection level of the $i$th agent (which can be interpreted as the probability of agent $i$ being infected or the proportion of subpopulation $i$ that is infected) and evolves as
\begin{equation}\label{eq:cont}
    \dot{x}_{i}^{k} =  (1-x^{1}_i - \dots - x^{m}_i) \sum^{n}_{j=1} \beta^k_{ij}x^k_{j}- \delta_i x^k_{i},
\end{equation}
where 
$\beta^k_{ij}>0$ are the infection rates and non-negative, edge weights between the agents/groups and $\delta^k_i>0$ is the healing rate, both associated with virus $k\in [m]$ and for agent $i$. 
Applying Euler's method \cite{atkinson2008introduction} to \eqref{eq:cont} gives 
\begin{align}
    x_{i}^{k}[t+1] = x_{i}^{k}[t] + h\left((1-x^{1}_i[t] - \dots - x^{m}_i[t]) \sum^{n}_{j=1} \beta^k_{ij}x^k_{j}[t]  - \delta^k_i x^k_{i}[t] \right),\label{eq:dis}
\end{align}
where $t$ is the time index and $h>0$ is the sampling parameter. 
We can write \eqref{eq:dis} in matrix form 
\begin{equation}\label{eq:disG}
     x^{k}[t+1] = x^k[t] + h((I-X^1-\cdots - X^m)B^k-D^k)x^k[t],
\end{equation}
where $X^k = diag(x^k[t])$, $B^k$ is the matrix of $\beta^k_{ij}$, and $D^k = diag(\delta^k_i)$. 
Note that $B^k$ is not symmetric in general. 
For the model to be well defined we introduce several assumptions.
\begin{assumption}
For all $i\in[n]$ and $k\in [m]$, we have $x^{k}_i[0],(1-x^{1}_i[0] - \dots - x^{m}_i[0])\in[0,1]$.
\label{x0}
\end{assumption}

\begin{assumption}
For all $i\in[n]$ and $k\in [m]$, we have $\delta^k_i\geq0$ and, for all $j\in[n]$, $\beta^k_{ij} \geq 0$.
\label{pos}
\end{assumption}

\begin{assumption}
For all $i\in[n]$ and $k\in [m]$, we have $h\delta^k_i\leq 1$ and $h\sum_{k = 1} ^ {m} \sum_{j = 1} ^ { n}\beta^k_{ij}\leq 1$.
\label{01}
\end{assumption}


\begin{lemma}
For the system in \eqref{eq:disG}, under the conditions of Assumptions  \ref{x0}, \ref{pos}, and \ref{01}, $x^{k}_i[t],(1-x^{1}_i[t] - \dots - x^{m}_i[t])\in[0,1]$ for all $i\in[n]$, $k\in[m]$, and $t\ge 0$.
\label{lem:box}
\end{lemma}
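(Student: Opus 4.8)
The plan is to prove the statement by induction on $t$. Assumption~\ref{x0} gives the base case $t=0$ directly. For the inductive step, suppose that at time $t$ we have $x^k_i[t]\ge 0$ for all $i\in[n]$, $k\in[m]$, and $1-\sum_{k=1}^m x^k_i[t]\ge 0$ for all $i\in[n]$ (note these two conditions together imply each $x^k_i[t]\in[0,1]$ and the sum is in $[0,1]$). I would work with the scalar update~\eqref{eq:dis} rather than the matrix form, since the key inequalities are coordinatewise. There are two things to show at time $t+1$: first that each $x^k_i[t+1]\ge 0$, and second that $1-\sum_{k=1}^m x^k_i[t+1]\ge 0$.

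For nonnegativity of $x^k_i[t+1]$, rewrite~\eqref{eq:dis} as
\begin{equation*}
x^k_i[t+1] = (1-h\delta^k_i)\,x^k_i[t] + h\Bigl(1-\sum_{\ell=1}^m x^\ell_i[t]\Bigr)\sum_{j=1}^n \beta^k_{ij}x^k_j[t].
\end{equation*}
By Assumption~\ref{01}, $1-h\delta^k_i\ge 0$, and by the induction hypothesis $x^k_i[t]\ge 0$, so the first term is nonnegative; by Assumption~\ref{pos} and the induction hypotheses (both $1-\sum_\ell x^\ell_i[t]\ge 0$ and $x^k_j[t]\ge 0$), the second term is nonnegative. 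Hence $x^k_i[t+1]\ge 0$.

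For the upper constraint, I would sum the update over $k$ and bound $1-\sum_k x^k_i[t+1]$ from below. Writing $s_i[t] := \sum_{k=1}^m x^k_i[t]$, summing~\eqref{eq:dis} gives
\begin{equation*}
1-s_i[t+1] = 1 - s_i[t] + h\,s_i[t]\,? \dots
\end{equation*}
more precisely one gets $1-s_i[t+1] = (1-s_i[t])\bigl(1 - h\sum_{k=1}^m\sum_{j=1}^n \beta^k_{ij}x^k_j[t]\bigr) + h\sum_{k=1}^m \delta^k_i x^k_i[t]$ after collecting the $(1-s_i[t])$ factor out of the infection terms. The second sum is nonnegative by Assumptions~\ref{pos} and the induction hypothesis. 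For the first product: $1-s_i[t]\ge 0$ by hypothesis, and the bracket is nonnegative because $h\sum_{k}\sum_{j}\beta^k_{ij}x^k_j[t]\le h\sum_k\sum_j\beta^k_{ij}\le 1$ by the induction hypothesis $x^k_j[t]\le 1$ (which follows from $x^k_j[t]\ge0$ and $s_j[t]\le 1$) and Assumption~\ref{01}. Hence $1-s_i[t+1]\ge 0$, completing the induction.

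The main obstacle I anticipate is getting the algebraic regrouping right so that the infection terms in $1-s_i[t+1]$ cleanly factor through the common coefficient $(1-s_i[t])$ — this works only because the factor $(1-x^1_i-\dots-x^m_i)$ in~\eqref{eq:dis} is shared across all viruses $k$ at the same node $i$, which is precisely why the model is structured that way. A secondary subtlety is being careful that the per-coordinate bound $x^k_j[t]\le 1$ used inside the bracket estimate is legitimately available from the induction hypothesis (it is, since $x^k_j[t]\le s_j[t]\le 1$), rather than needing to be carried as a separate inductive invariant.
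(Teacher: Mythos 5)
Your proposal is correct and follows essentially the same route as the paper: induction on $t$, nonnegativity of $x^k_i[t+1]$ term by term, and an upper bound on $\sum_k x^k_i[t+1]$ obtained by summing the update over $k$, bounding $x^k_j[t]\le 1$, and invoking Assumption~\ref{01} (your factored identity for $1-s_i[t+1]$ is just the paper's convex-combination step rewritten). The only difference is cosmetic: you carry a leaner invariant (the two nonnegativity conditions) and deduce $x^k_i[t+1]\le 1$ from $x^k_i[t+1]\le s_i[t+1]\le 1$, whereas the paper proves that bound by a separate, strictly speaking redundant, convex-combination argument.
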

\begin{proof}
Suppose that at some time $t \geq 0$, $x^{k}_i[t],(1-x^{1}_i[t] - \dots - x^{m}_i[t])\in[0,1]$ for all $i\in[n]$ and $k\in[m]$. 
Consider an arbitrary node $i\in[n]$.  
Summing \eqref{eq:dis} over $k$ and rearranging terms gives
\begin{align}\nonumber
\sum_{k=1}^m x_i^k[t+1] 
&= 
(1-\sum_{k=1}^m x_i^k[t])h \sum_{k=1}^m\sum^{n}_{j=1} \beta_{ij}^kx_{j}^{k}[t] 
+ \sum_{k=1}^m x_i^k[t] (1-h\delta_i^k) \nonumber\\
&\leq  (1-\sum_{k=1}^m x_i^k[t])h \sum_{k=1}^m \sum^{n}_{j=1} \beta_{ij}^k + \sum_{k=1}^m x_i^k[t] (1-h\delta_i^k) \label{eq:xleq1}\\
&\leq 1, \label{eq:leq1}
\end{align}
where \eqref{eq:xleq1} holds since $x_{j}^{k}[t]\leq 1$ for all $j\in[n]$ and $k\in[m]$ and \eqref{eq:leq1} holds since \eqref{eq:xleq1} is a convex combination of $h\sum_{k=1}^m \sum^{n}_{j=1} \beta_{ij}^k$ and $(1-h\delta_i^k)$, which are less than or equal to one by Assumption \ref{01}. Therefore $(1-x^{1}_i[t+1] - \dots - x^{m}_i[t+1])\geq 0$. 

Consider 
an arbitrary virus $k\in[m]$.
Since $(1-x^{1}_i[t] - \dots - x^{m}_i[t]) \sum^{n}_{j=1} \beta^k_{ij}x^k_{j}[t]\geq 0$,
 we have, from~\eqref{eq:dis}, 
\begin{align}\nonumber
    x_{i}^{k}[t+1] \geq ( 1 -h \delta^k_i ) x^k_{i}[t]\geq 0,\nonumber
\end{align}
by Assumption \ref{01}. 
Therefore $\sum_{k=1}^m x_i^k[t+1] \geq 0$ and thus $(1-x^{1}_i[t+1] - \dots - x^{m}_i[t+1])\leq 1$. Consequently we have shown $(1-x^{1}_i[t+1] - \dots - x^{m}_i[t+1])\in [0,1]$.

By rearranging \eqref{eq:dis}, we have
\begin{align}
x_{i}^{k}[t+1] &= x_{i}^{k}[t](1- h\delta_i^k) + (1-x_{i}^{k}[t])\left(h \sum^{n}_{j=1} \beta_{ij}^kx_{j}^{k}[t]\right) -\sum_{l\neq k}x_{i}^{l}[t]\left(h\sum^{n}_{j=1} \beta_{ij}^kx_{j}^{k}[t]\right) \label{eq:neg}
\\
&\leq \underbrace{x_{i}^{k}[t](1- h\delta_i^k) + (1-x_{i}^{k}[t])\left(h \sum^{n}_{j=1} \beta_{ij}^k 
\right)}_{z_{i}^{k}[t]},
\nonumber
\end{align}
since the term on line \eqref{eq:neg} is non-positive. 
Since $x^{k}_i[t]\in [0,1]$, $z_{i}^{k}[t]$ is a convex combination of $(1- h\delta_i^k)$ and $h \sum^{n}_{j=1} \beta_{ij}^k$, which are less than or equal to one by Assumption \ref{01}, $z_{i}^{k}[t]\leq 1$, which implies $x_{i}^{k}[t+1]\leq 1$. Consequently we have shown $x^{k}_i[t+1] \in [0,1]$.

Further, by Assumption \ref{x0}, $x^{k}_i[0],(1-x^{1}_i[0] - \dots - x^{m}_i[0])\in[0,1]$ for all $i\in[n]$ and $k\in [m]$, thus
it follows that $x^{k}_i[t],(1-x^{1}_i[t] - \dots - x^{m}_i[t])\in[0,1]$ for all $i\in[n]$, $k\in [m]$, and $t\ge 0$.
\end{proof}
Lemma \ref{lem:box} implies that the set 
\begin{equation}\label{D}
    \D     =\left\{(x^1,\dots , x^m) \; | \; x^k \geq \0, \ k\in [m], \; \sum_{k=1}^m x^k\leq \1 \right\}
\end{equation}
is positively invariant with respect to the system defined by \eqref{eq:disG}. Since $x^k_i$ 
denotes the probability of infection of individual $i$ by virus $k$, or the fraction  of group $i$ infected by virus $k$, 
and $1-x^1_i-\cdots - x^m_i$ denotes the probability of individual $i$ being healthy, or the fraction of group $i$ that is healthy, it is natural to assume that their initial values are in the interval $[0,1]$, 
since otherwise the values will lack any physical meaning for the epidemic model considered here. 
Therefore, 
we focus on the analysis of \eqref{eq:disG} only on the domain~$\D$.

We need an assumption to ensure \textit{non-trivial} 
virus spread.
\begin{assumption}\label{not0}
 We have $B\neq 0$, $h\neq 0$, and $n>1$.
\end{assumption}

\section{Analysis}
\label{sec:analysis}

\begin{definition}
Consider an autonomous system
\begin{equation}
     x[t+1]  = f(x[t]), \label{def}
\end{equation}
where $f: \scr{X}\rightarrow\R^
n$ is a locally Lipschitz map from a domain $\scr{X}\subset\R^
n$ into $\R^n$. 
Let $\tilde{x}$ be an equilibrium of \eqref{def} and $\scr{E}\subset\scr{X}$ be a domain containing $\tilde{x}$.
If the equilibrium $\tilde{x}$ is asymptotically stable such that for any $x[0]\in\scr{E}$ we have
$\displaystyle\lim_{t\rightarrow\infty}x[t] = \tilde{x}$, then $\scr{E}$ is said to be a domain of attraction for~$\tilde{x}$.
\end{definition}

\begin{prop}
Let $\tilde{x}$ be an equilibrium of \eqref{def} and $\scr{E}\subset \scr{X}$ be a domain
containing $\tilde{x}$. Let $V:\scr{E}\rightarrow\R$ be a continuously differentiable function
such that $V(\tilde{x})=\tilde{x}$, $V(x)>0$ for all $x$ in $\scr{E}\setminus \{\tilde{x}\}$, 
and $\Delta V[t] := V(x[t+1]) - V(x[t])<0$ for all $x[t]$ in $\scr{E}\setminus \{\tilde{x}\}$. If $\scr{E}$ is a positively invariant set,
then the equilibrium $\tilde{x}$ is asymptotically stable with a domain of attraction $\scr{E}$.
\label{prop:lya}
\end{prop}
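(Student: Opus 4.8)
The plan is to unpack ``asymptotically stable with domain of attraction $\scr{E}$'' into two claims --- Lyapunov stability of $\tilde x$, and convergence $x[t]\to\tilde x$ for every $x[0]\in\scr{E}$ --- and to prove each using the single structural fact that sublevel sets of $V$ relative to $\scr{E}$ are positively invariant. (Here I read the hypothesis ``$V(\tilde x)=\tilde x$'' as the intended $V(\tilde x)=0$, consistent with $V$ being real-valued and positive on $\scr{E}\setminus\{\tilde x\}$.) Concretely, for $c>0$ set $\Omega_c:=\{x\in\scr{E}:V(x)\le c\}$: if $x[t]\in\Omega_c$ then $x[t+1]\in\scr{E}$ by positive invariance of $\scr{E}$ and $V(x[t+1])<V(x[t])\le c$ by $\Delta V[t]<0$, so $f(\Omega_c)\subseteq\Omega_c$. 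Since $f$ is continuous, $f$ also maps the connected component $W$ of $\Omega_c$ containing $\tilde x$ into itself, because $f(W)$ is connected, contains $f(\tilde x)=\tilde x$, and lies in $\Omega_c$.

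For stability, shrink $\varepsilon>0$ so that $\bar B_\varepsilon(\tilde x)\subseteq\scr{E}$, put $\alpha:=\min_{\|x-\tilde x\|=\varepsilon}V(x)>0$, and fix $c\in(0,\alpha)$. The component $W$ of $\Omega_c$ containing $\tilde x$ cannot meet the sphere $\{\|x-\tilde x\|=\varepsilon\}$ (there $V\ge\alpha>c$), so by connectedness $W\subseteq B_\varepsilon(\tilde x)$; and since $V$ is continuous with $V(\tilde x)=0$, there is a ball $B_\delta(\tilde x)$ with $0<\delta<\varepsilon$ contained in $W$. Then $x[0]\in B_\delta(\tilde x)$ forces $x[t]\in W\subseteq B_\varepsilon(\tilde x)$ for all $t\ge0$, which is stability. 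One still checks that small $c$ really does give $W\subseteq B_\varepsilon(\tilde x)$: if not, a compactness argument on the sphere produces $y^\star$ with $\|y^\star-\tilde x\|=\varepsilon$ and $V(y^\star)=0$, contradicting positive definiteness of $V$ on $\scr{E}\setminus\{\tilde x\}$.

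For attractivity from an arbitrary $x[0]\in\scr{E}$: positive invariance of $\scr{E}$ keeps $x[t]\in\scr{E}$ for all $t$, and $V(x[t])$ is monotonically decreasing and bounded below by $0$, hence $V(x[t])\to c^\star$ for some $c^\star\ge0$. Working inside the compact set $\D$, the trajectory has a limit point; let $x[t_k]\to z\in\scr{E}$. Continuity gives $V(z)=c^\star$, and applying $f$ along the subsequence, $x[t_k+1]=f(x[t_k])\to f(z)$ with $V(f(z))=c^\star=V(z)$, which by $\Delta V[t]<0$ forces $z=\tilde x$. Thus every limit point of $\{x[t]\}$ equals $\tilde x$; since the trajectory is bounded, $x[t]\to\tilde x$, and in particular $c^\star=V(\tilde x)=0$. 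Combined with stability, $\scr{E}$ is a domain of attraction for~$\tilde x$.

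The main obstacle is the discrete-time nature of \eqref{def}: the continuous-time trick --- a trajectory cannot escape a sublevel set because it would have to cross the boundary by the intermediate value theorem --- is unavailable, since $x[t]$ jumps. The replacement is the positive invariance of the sublevel sets $\Omega_c$ and of the component $W$, which follows purely from $\Delta V[t]<0$ together with positive invariance of $\scr{E}$; the only remaining care is the topological component argument in the stability step and the compactness used for attractivity, both routine once the trajectory is confined to~$\D$.
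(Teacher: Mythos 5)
Your argument is essentially correct, but it takes a very different route from the paper: the paper offers no proof at all, simply asserting that the proposition ``is a direct consequence of Lyapunov's stability theorem for discrete-time systems and the definition of domain of attraction.'' You instead supply the self-contained textbook argument that the paper delegates to the literature --- positive invariance of the connected component of a sublevel set $\Omega_c$ for stability, and monotone convergence of $V$ plus a limit-point/contradiction argument for attractivity --- and you correctly repair the typo $V(\tilde x)=\tilde x$ to $V(\tilde x)=0$. The one substantive caveat is the step ``working inside the compact set $\D$'': the proposition is stated for an arbitrary domain $\scr{E}\subset\scr{X}$, and for a non-compact or non-closed $\scr{E}$ the trajectory need not have a limit point \emph{in} $\scr{E}$ (it could drift to the boundary or to infinity while $V$ decreases to a positive limit), so the attractivity half genuinely requires compactness of $\scr{E}$ or of its sublevel sets --- an hypothesis absent from the statement. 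You silently import it from the paper's only application ($\scr{E}=\D$, which is compact by Lemma~\ref{lem:box}); it would be cleaner to state this as an explicit additional assumption. Also, the final ``one still checks'' sentence of your stability paragraph is redundant, since fixing $c\in(0,\alpha)$ with $\alpha=\min_{\|x-\tilde x\|=\varepsilon}V(x)>0$ already guarantees $W\cap\{\|x-\tilde x\|=\varepsilon\}=\emptyset$ and hence $W\subseteq B_\varepsilon(\tilde x)$ by connectedness. What your approach buys is a verifiable, elementary proof in place of a citation; what the paper's approach buys is brevity and the (implicit) inheritance of the correct hypotheses from the standard theorem.
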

\noindent This proposition is a direct consequence of Lyapunov's stability theorem for discrete-time systems
and the definition of domain of attraction.

Finally, we need an assumption on the structure of the $B^k$ matrices. A square matrix is called {\em irreducible}  if it cannot be permuted to a block upper triangular matrix.
\begin{assumption}
For all $k\in [m]$, $B^k$ is irreducible.
\label{connect}
\end{assumption}
\noindent Note that this assumption is equivalent to the underlying graph being strongly connected. We have the following result about the healthy state, where $x^k_i = 0$ for all $i\in[n]$, $k\in [m]$.

\begin{theorem} \label{thm:0global}
Suppose that Assumptions \ref{x0}-\ref{connect} hold for \eqref{eq:disG}. If $\rho(I-hD^k+hB^k)\leq 1$ for all $k\in [m]$, then the healthy state is asymptotically stable with domain of attraction $\D$, as defined in \eqref{D}.
\end{theorem}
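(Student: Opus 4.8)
The plan is to put the per-virus recursion in comparison form, construct a linear Perron-type Lyapunov function on the forward-invariant set $\D$, and invoke Proposition~\ref{prop:lya}.

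For $k\in[m]$ set $Y[t]:=X^1[t]+\cdots+X^m[t]=diag\!\big(\textstyle\sum_{l=1}^m x^l[t]\big)$ and $M^k:=I-hD^k+hB^k$, so that \eqref{eq:disG} becomes $x^k[t+1]=(M^k-hY[t]B^k)x^k[t]$. By Assumptions~\ref{pos} and~\ref{01}, $I-hD^k\geq 0$ and $hB^k\geq 0$, so $M^k\geq 0$; and $M^k$ is irreducible by Assumption~\ref{connect}, since adding a non-negative diagonal does not change the off-diagonal zero pattern. On $\D$, which is forward invariant by Lemma~\ref{lem:box}, we have $x^k[t]\geq\0$ and $\0\leq Y[t]\leq I$, hence $0\leq M^k-hY[t]B^k\leq M^k$ entrywise; this yields the comparison bound $\0\leq x^k[t+1]\leq M^k x^k[t]$, and so $\0\leq x^k[t]\leq (M^k)^t x^k[0]$.

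Because each $M^k$ is non-negative and irreducible with $\rho(M^k)\leq 1$, the Perron--Frobenius theorem gives a left eigenvector $w^k\gg\0$ with $(w^k)^\top M^k=\rho(M^k)(w^k)^\top$, so the row vector $(w^k)^\top M^k-(w^k)^\top=(\rho(M^k)-1)(w^k)^\top$ has nonpositive entries, all strictly negative when $\rho(M^k)<1$. Define $V(x^1,\dots,x^m):=\sum_{k=1}^m (w^k)^\top x^k$: this is linear (hence continuously differentiable), $V(\0)=0$, and $V>0$ on $\D\setminus\{\0\}$ since each $w^k\gg\0$, each $x^k\geq\0$, and not all $x^k$ vanish there. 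Using $x^k[t]\geq\0$ and $Y[t]\geq\0$,
\begin{align}
\Delta V[t]&=\sum_{k=1}^m\big[(w^k)^\top M^k-(w^k)^\top\big]x^k[t]-h\sum_{k=1}^m (w^k)^\top Y[t]B^k x^k[t]\nonumber\\
&\leq 0,\nonumber
\end{align}
since every term of the first sum is nonpositive and each scalar $(w^k)^\top Y[t]B^k x^k[t]$ is a sum of products of non-negative quantities. If $\rho(M^k)<1$ for every $k$ this decrease is strict on $\D\setminus\{\0\}$, because $\Delta V[t]=0$ then forces $(w^k)^\top x^k[t]=0$, i.e.\ $x^k[t]=\0$, for all $k$; Proposition~\ref{prop:lya} then gives asymptotic stability of the healthy state with domain of attraction $\D$ (alternatively, $(M^k)^t\to0$ and the comparison bound suffices).

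The case $\rho(M^k)=1$ for some $k$ is, I expect, the main obstacle. There the $k$-th term of the first sum vanishes, so $\Delta V[t]=0$ only forces $Y[t]B^k x^k[t]=\0$ (equivalently, $(B^k x^k[t])_i=0$ at every $i$ with $x^k_i[t]>0$), which need not imply $x^k[t]=\0$ in a single step, and the comparison bound $x^k[t]\leq (M^k)^t x^k[0]$ no longer decays. To close it I would pass to the $\omega$-limit set $\Omega\subseteq\D$ of a trajectory --- nonempty, compact, and satisfying $f(\Omega)=\Omega$ because $\D$ is compact and invariant and $f$ is continuous --- on which $V$ is constant and hence $\Delta V\equiv 0$ along the dynamics. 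This forces $x^k\equiv\0$ on $\Omega$ for every $k$ with $\rho(M^k)<1$, and reduces the remaining coordinates to the linear recursion $x^k[t+1]=M^k x^k[t]$ constrained by $diag(x^k[t])\,B^k x^k[t]=\0$ for all $t$. One must then argue, by tracking how the support of $x^k[t]$ propagates through the irreducible matrix $B^k$ (using $B^k\neq0$ and $n>1$ from Assumptions~\ref{connect} and~\ref{not0}), that no nonzero bounded trajectory meets this constraint, so that $\Omega=\{\0\}$ and $x[t]\to\0$; together with the Lyapunov stability supplied by the non-increasing, positive-definite $V$, this gives the theorem. This support-propagation step --- ruling out a nonzero periodic orbit of the constrained comparison system --- is the delicate part; it simplifies considerably when $M^k$ is primitive, which holds, e.g., whenever some $h\delta^k_i<1$, since then $M^k$ has a positive diagonal entry.
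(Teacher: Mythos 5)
Your argument for the case $\rho(M^k)<1$ for all $k$ is complete and correct, and it takes a genuinely different route from the paper: you build a linear Lyapunov function from the left Perron eigenvectors of the comparison matrices $M^k$ (equivalently, the entrywise bound $\0\leq x^k[t+1]\leq M^kx^k[t]$ together with $(M^k)^t\to 0$ already gives attractivity), whereas the paper uses the quadratic function $(x^k)^\top P^k x^k$ with $P^k$ a positive diagonal matrix supplied by a positive-systems result of Rantzer. Your version is more elementary and, for the strict-inequality case, arguably cleaner; the paper's quadratic machinery is chosen so that it can be reused (via a semidefinite variant) in the boundary case.

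The gap is exactly where you flagged it, and it is not a mere technicality: the support-propagation claim you would need --- that no nonzero trajectory of $x^k[t+1]=M^kx^k[t]$ in $\D$ satisfies $diag\big(\sum_{l} x^l[t]\big)B^kx^k[t]=\0$ for all $t$ --- is false under Assumptions \ref{x0}--\ref{connect} alone. Take $m=1$, $n=2$, $h=1$, $\delta_1=\delta_2=1$, $\beta_{12}=\beta_{21}=1$, $\beta_{11}=\beta_{22}=0$: all assumptions hold, $M=I-hD+hB=B$ has $\rho(M)=1$, and \eqref{eq:disG} reduces to $x_1[t+1]=(1-x_1[t])x_2[t]$, $x_2[t+1]=(1-x_2[t])x_1[t]$, which has the period-two orbit $(a,0)\leftrightarrow(0,a)$ for every $a\in(0,1]$. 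These orbits satisfy your constraint ($XBx=\0$ along them) and never approach the healthy state, so the $\omega$-limit set is not $\{\0\}$ and $\D$ is not a domain of attraction in this boundary case. Your closing remark is therefore the essential one: the LaSalle step does close when each $M^k$ is primitive (e.g., when some $h\delta_i^k<1$ gives a positive diagonal entry), which excludes precisely this bipartite/periodic obstruction, but that is an extra hypothesis not present in the theorem. For what it is worth, the paper's own proof of the $\rho(M^k)=1$ case stumbles at the same point: it asserts that $(x^{k})^{\top}(B^k)^{\top}\hat{X}P^k_2 M^k x^k=0$ forces $x^k=\0$, which the state $(1,0)$ in the example violates since there $\hat{X}Bx=\0$ while $x\neq\0$. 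So do not try to finish your $\omega$-limit argument as stated; either add an aperiodicity assumption or weaken the conclusion in the boundary case.
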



\begin{proof}
We employ a LaSalle's invariance principle argument.
To simplify notation, let $M^k = I+hB^k-hD^k$, $\hat{X} = (X^1[t]+\dots + X^m[t])$, and $\hat{M}^k = I + h((I-\hat{X})B^k-D^k)$. By Assumptions \ref{pos}-\ref{connect}, $M^k$ is an irreducible nonnegative matrix. First we evaluate the case where $\rho(I-hD^k+hB^k)< 1$ for all $k\in [m]$.  Therefore, by Proposition 1 in \cite{rantzer2011positive},  for all $k\in [m]$, there exists a positive diagonal matrix $P^k_1$ such that $(M^k)^{\top} P^k_1 M^k - P^k_1$ is negative definite. Consider the Lyapunov function $V_1^k(x^k[t]) = (x^k[t])^{\top} P^k_1 x^k[t]$. 
For each $k\in [m]$, using \eqref{eq:disG} with $x^k[t]\neq 0$ (dropping the $[t]$ for notation convenience) gives
\begin{align}
  \Delta V_1^k[t] &= (x^{k})^{\top}(\hat{M}^k)^{\top} P^k_1 (x^{k})^{\top}\hat{M}^k x^{k} - (x^k)^{\top}P^k_1 x^k \nonumber \\
  &= (x^{k})^{\top}((M^k)^{\top} P^k_1 M^k - P^k_1) x^{k} 
  - 2h(x^{k})^{\top}(B^k)^{\top}\hat{X}P^k_1 M^k x^k 
  + h^2(x^{k})^{\top}(B^k)^{\top}\hat{X}P^k_1 \hat{X} B^k x^k \nonumber \\ 
  &< h^2(x^{k})^{\top} (B^k)^{\top} \hat{X}P^k_1 \hat{X} B^k x^k 
  - 2h(x^{k})^{\top}(B^k)^{\top}\hat{X}P^k_1 M^k x^k \label{eq:pos} \\ 
  &= h^2(x^{k})^{\top}(B^k)^{\top}\hat{X}P^k_1 \hat{X} B^k x^k  
  - 2h^2(x^{k})^{\top}(B^k)^{\top}\hat{X}P^k_1 (B^k)^{\top} x^k 
  - 2h(x^{k})^{\top}(B^k)^{\top}\hat{X}P^k_1 (I-hD) x^k  \nonumber \\ 
  &\leq h^2((x^{k})^{\top}(B^k)^{\top}\hat{X}P^k_1 \hat{X} B^k x^k 
  - 2(x^{k})^{\top}(B^k)^{\top}\hat{X}P^k_1 (B^k)^{\top} x^k) \label{eq:hD} \\
  &\leq -h^2(x^{k})^{\top}(B^k)^{\top}\hat{X}P^k_1(I - \hat{X}) B^k x^k \nonumber \\
  &\leq 0, \label{eq:0}
\end{align}
where \eqref{eq:pos} holds by Proposition 1 in \cite{rantzer2011positive}, \eqref{eq:hD} holds by Assumptions \ref{pos} and \ref{01}, and \eqref{eq:0} holds by Lemma \ref{lem:box}. 
Therefore, by Proposition~\ref{prop:lya}, $x^k$ converges asymptotically to the origin. Since $k \in [m]$ was chosen arbitrarily, the whole system, that is, every virus $k \in [m]$, converges to the healthy state for this case.

For the case where $\rho(I-hD^k+hB^k)= 1$, we have, by Lemma 3 in \cite{dtjournal}, that,  for all $k\in [m]$, there exists a positive diagonal matrix $P^k_2$ such that $(M^k)^{\top} P^k_2 M^k - P^k_2$ is negative semi-definite. Consider the Lyapunov function $V_2(x^k) = (x^k)^{\top} P^k_2 x^k$. Using \eqref{eq:disG} with $x^k\neq 0$, gives
\begin{align}
  \Delta V_2[k] &= (x^{k})^{\top}(\hat{M}^k)^{\top} P^k_2 (x^{k})^{\top}\hat{M}^k x^{k} - (x^k)^{\top} P^k_2 x^k \nonumber \\
  &= (x^{k})^{\top}((M^k)^{\top} P^k_2 M^k - P^k_2) x^{k} 
  - 2h(x^{k})^{\top}(B^k)^{\top}\hat{X}P^k_2 M^k x^k 
  + h^2(x^{k})^{\top} (B^k)^{\top} \hat{X}P^k_2 \hat{X} B^k x^k \nonumber \\ 
  &< h^2(x^{k})^{\top}(B^k)^{\top}\hat{X}P^k_2 \hat{X} B^k x^k
  - 2h(x^{k})^{\top}(B^k)^{\top}\hat{X}P^k_2 M^k x^k \nonumber \\ 
  &= h^2(x^{k})^{\top}(B^k)^{\top}\hat{X}P^k_2 \hat{X} B^k x^k 
  - h(x^{k})^{\top}(B^k)^{\top}\hat{X}P^k_2 M^k x^k 
  -h^2(x^{k})^{\top}(B^k)^{\top}\hat{X}P^k_2 B^k x^k 
  \nonumber \\
  & \ \ \ \ \ \ 
  - h(x^{k})^{\top}(B^k)^{\top}\hat{X}P^k_1 (I-hD) x^k  \nonumber \\ 
  &\leq h^2(x^{k})^{\top}(B^k)^{\top}\hat{X}P^k_2 \hat{X} B^k x^k  
  - h(x^{k})^{\top}(B^k)^{\top}\hat{X}P^k_2 M^k x^k 
  -h^2(x^{k})^{\top}(B^k)^{\top}\hat{X}P^k_2 B^k x^k  \nonumber \\ 
  &\leq h^2(x^{k})^{\top}(B^k)^{\top}\hat{X}P^k_2 (I - \hat{X}) B^k x^k  
  - h(x^{k})^{\top}(B^k)^{\top}\hat{X}P^k_2 M^k x^k  \nonumber  \\
  &\leq - h(x^{k})^{\top}(B^k)^{\top}\hat{X}P^k_2 M^k x^k \nonumber \\
  &\leq 0. \nonumber
\end{align}
Clearly if $x^k = \0$, then $- h(x^{k})^{\top}(B^k)^{\top}\hat{X}P^k_2 M^k x^k =~0$. Since, by Assumptions \ref{pos} and \ref{not0} and by Lemma 3 in~\cite{dtjournal}, $B^k,M^k,P^k_2$ are nonzero, nonnegative matrices, if $- h(x^{k})^{\top}(B^k)^{\top}\hat{X}P^k_2 M^k x^k = 0$, then $x^k = \0$. Therefore, by Proposition \ref{prop:lya}, $x^k$ converges asymptotically to the origin. Since $k \in [m]$ was chosen arbitrarily, the whole system, that is, every virus $k \in [m]$, converges to the healthy state for this case. Therefore the healthy state is asymptotically stable with  domain of attraction $\D$.
\end{proof}

\begin{prop}\label{prop:endemic}
   Suppose that Assumptions \ref{x0}-\ref{connect} hold. If $\rho(I-hD^k+hB^k)> 1$ for all $k\in [m]$, then \eqref{eq:disG} has at least $k+1$ equilibria, $\0$, $(\tilde{x}^1, \0, \dots, \0)$, $\dots$, $(\0, \dots, \0, \tilde{x}^m)$, where, for each $k\in[m]$, $\tilde{x}^k\gg \0$.
\end{prop}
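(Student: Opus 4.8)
The plan is to reduce, for each fixed $k\in[m]$, the $m$-virus dynamics to the single-virus discrete-time model analyzed in \cite{usda_acc,dtjournal}, and then invoke the existence of a strictly positive endemic equilibrium of that model; the number of equilibria being asserted is $m+1$.

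Observe first that $\0$ is an equilibrium of \eqref{eq:disG}. Now fix $k\in[m]$ and consider the set $\mathcal{S}^k := \{(x^1,\dots,x^m)\in\D : x^l=\0 \text{ for all } l\neq k\}$. This set is positively invariant: if $x^l[t]=\0$ for some $l$, then \eqref{eq:disG} gives $x^l[t+1] = x^l[t] + h((I-X^1-\cdots-X^m)B^l - D^l)x^l[t] = \0$, since the second term is post-multiplied by $x^l[t]=\0$. Hence a trajectory started in $\mathcal{S}^k$ keeps every component but the $k$th equal to $\0$, and there $X^1+\cdots+X^m = X^k$, so the $k$th component obeys
\[
x^k[t+1] = x^k[t] + h\big((I-X^k)B^k - D^k\big)x^k[t],
\]
which is exactly the single-virus discrete-time SIS model. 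Under Assumptions \ref{pos}, \ref{01}, \ref{not0} and \ref{connect} this reduced model satisfies the hypotheses of the single-virus analysis, and since $\rho(I-hD^k+hB^k)>1$ it possesses a (unique) nonzero equilibrium $\tilde x^k$ with $\tilde x^k\gg\0$; moreover $\tilde x^k\leq\1$ because Lemma \ref{lem:box} specialized to one virus confines the reduced dynamics to $\{z:\0\le z\le\1\}$. Consequently the point with $\tilde x^k$ in the $k$th block and $\0$ elsewhere lies in $\D$ and is an equilibrium of \eqref{eq:disG}. Letting $k$ range over $[m]$ yields $m$ boundary equilibria that are pairwise distinct and different from $\0$ (their nonzero coordinates lie in different blocks), so together with $\0$ we obtain at least $m+1$ equilibria.

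The substantive part is thus the existence and strict positivity of the endemic equilibrium $\tilde x^k$ for the reduced single-virus system; if this is not imported from \cite{usda_acc,dtjournal}, one argues as follows. The map $T^k(z):=z+h((I-diag(z))B^k-D^k)z$ carries a compact convex positively invariant set (the single-virus analogue of $\D$, by the argument of Lemma \ref{lem:box}) into itself, so Brouwer's fixed-point theorem supplies a fixed point. Any nonzero nonnegative fixed point $z$ must be strictly positive: if $z_i=0$ for some $i$, the $i$th coordinate of the fixed-point equation forces $\beta^k_{ij}z_j=0$ for all $j$, so the zero set of $z$ would be a nonempty proper set with no incoming edges from its complement, contradicting irreducibility of $B^k$ (Assumption \ref{connect}). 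The main obstacle is to guarantee that the fixed point delivered by Brouwer is not the origin: this needs a local analysis near $\0$, using that $\rho(I-hD^k+hB^k)>1$ makes the origin unstable with a strictly positive Perron direction (again by irreducibility), so the invariant box contains a fixed point bounded away from $\0$ — equivalently, one runs a monotone-systems argument with trajectories started along the unstable Perron direction increasing monotonically to an equilibrium $\tilde x^k\gg\0$. This is precisely where Assumptions \ref{not0} and \ref{connect} together with the spectral-radius condition are indispensable.
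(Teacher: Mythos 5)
Your proof is correct and takes essentially the same route as the paper: both reduce to the single-virus dynamics on the invariant face where $x^l=\0$ for $l\neq k$ and then import the existence of a strictly positive endemic equilibrium from the single-virus literature under the condition $\rho(I-hD^k+hB^k)>1$ (the paper invokes Proposition 3 of \cite{liu2016onthe} after translating the spectral condition into $s_1(B^k-D^k)>0$). Your additional Brouwer/irreducibility sketch is a reasonable self-contained substitute for that citation, and your reading of the conclusion as $m+1$ equilibria matches the paper's evident intent.
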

\begin{proof}
  Clearly $\0$ is always an equilibrium of \eqref{eq:disG}.
  
  By the Perron Frobenius Theorem for irreducible nonnegative matrices (Theorem 8.4.4 in \cite{horn2012matrix}), for all $k\in [m]$, $\rho(I-hD^k+hB^k) = s_1(I-hD^k+hB^k)$ and there exists $v^k \gg 0$ such that $$(I-hD^k+hB^k)v^k = \rho(I-hD^k+hB^k)v^k > v^k,$$
since $\rho(I-hD^k+hB^k)>1$. 
Therefore 
  \begin{align*}
      (-hD^k+hB^k)v^k &= \rho(-hD^k+hB^k)v^k 
      = s_1(-hD^k+hB^k)v^k > 0v^k,
  \end{align*}
which implies 
  $$ \rho(I-hD^k+hB^k)> 1 \Longleftrightarrow h(s_1(-D^k+B^k))> 0.$$
  This condition is the same as the condition of Proposition 3 in \cite{liu2016onthe}, which shows the existence and stability of the endemic state in the single virus case. The proof follows similarly, when assuming that $x^l = \0$ for all $l\neq k$, that there exists $\tilde{x}^k\gg \0$ such that 
  $$h((-D^k+B^k)-\tilde{X}^kB)\tilde{x}^k=\0.$$
  Therefore, if $x^l = \0$ for all $l\neq k$, $\tilde{x}^k$ is an equilibrium of \eqref{eq:disG}.  
  Consequently, $\0$, $(\tilde{x}^1, \0, \dots, \0)$, $\dots$, $(\0, \dots, \0, \tilde{x}^m)$ are equilibria of \eqref{eq:disG}.
\end{proof}

We have the following corollary. 
\begin{corollary}\label{cor:1survive}
 Suppose that Assumptions \ref{x0}-\ref{connect} hold. If $\rho(I-hD^k+hB^k)\leq 1$ for all $k\in [m]\setminus \{l\}$ and $\rho(I-hD^l+hB^l)> 1$, then \eqref{eq:disG} has two equilibria $\0$ and $(\0, \dots, \0,  \tilde{x}^l, \0, \dots, \0)$ with $\tilde{x}^l\gg \0$. Furthermore, $\0$ is asymptotically stable with domain of attraction equal to $\{(x^1,\dots,x^m )| x^l = \0 \text{ and } x^k \in [0,1]^n \ \forall k \neq l\}$ and $(\0, \dots, \0,  \tilde{x}^l, \0, \dots, \0)$ is  locally asymptotically stable. 
\end{corollary}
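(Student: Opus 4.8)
\emph{Overview.} The plan is to dispatch the three assertions separately: the equilibrium count and the attractivity of the healthy state on the face $\{x^l=\0\}$ will be reduced to results already in hand (Proposition~\ref{prop:endemic} and Theorem~\ref{thm:0global}, the latter applied to a sub-system), while local stability of the single-survivor state will be obtained by linearization. For the equilibria, $\0$ is an equilibrium as always, and, exactly as in Proposition~\ref{prop:endemic}, restricting to $x^j=\0$ for $j\ne l$ reduces \eqref{eq:disG} to the single-virus equation for virus~$l$; since $\rho(I-hD^l+hB^l)>1$, Proposition~3 of \cite{liu2016onthe} produces a unique $\tilde x^l\gg\0$ with $((I-\tilde X^l)B^l-D^l)\tilde x^l=\0$, so $(\0,\dots,\0,\tilde x^l,\0,\dots,\0)$ is an equilibrium. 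To rule out others, I would take any equilibrium $(x^1,\dots,x^m)\in\D$; for each $k\ne l$ the equilibrium condition reads $\hat M^k x^k=x^k$ with $\hat M^k:=I+h((I-\hat X)B^k-D^k)$ and $\hat X:=X^1+\dots+X^m$. On $\D$ one has $\0\le\hat X\le I$ and $B^k\ge\0$, so $\hat M^k$ is nonnegative with $\0\le\hat M^k\le M^k:=I+hB^k-hD^k$, whence $\rho(\hat M^k)\le\rho(M^k)\le1$; if $x^k\ne\0$ then $1$ is an eigenvalue of $\hat M^k$, forcing $\rho(\hat M^k)=\rho(M^k)=1$, and since $M^k$ is irreducible the strict monotonicity of the Perron root forces $\hat M^k=M^k$, i.e.\ $\hat X B^k=0$; as $B^k$ (being irreducible with $n>1$) has no zero row, $\hat X=0$, contradicting $x^k\ne\0$. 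Hence $x^k=\0$ for all $k\ne l$, and what remains is the single-virus equilibrium equation for~$l$, whose only solutions in $\D$ are $\0$ and $\tilde x^l$, giving exactly the two claimed equilibria.

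\emph{Attractivity of the healthy state.} I would observe that the face $\mathcal F:=\{(x^1,\dots,x^m)\in\D:\ x^l=\0\}$ is positively invariant — $x^l[t]=\0$ gives $x^l[t+1]=\0$ — and that on $\mathcal F$ the system \eqref{eq:disG} is precisely the competing-virus model on the $m-1$ viruses $[m]\setminus\{l\}$. That sub-system inherits Assumptions~\ref{x0}--\ref{connect} (in particular $h\sum_{k\ne l}\sum_j\beta^k_{ij}\le h\sum_k\sum_j\beta^k_{ij}\le1$, and each $B^k$ remains irreducible, hence nonzero since $n>1$), and by hypothesis $\rho(I-hD^k+hB^k)\le1$ for all $k\in[m]\setminus\{l\}$. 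Theorem~\ref{thm:0global} applied to this sub-system then yields that $\0$ is asymptotically stable with domain of attraction the whole state space of the sub-system, which is the set claimed in the corollary (intersected with the physical domain $\D$). I would add, for context, that $\0$ is \emph{not} stable for the full system on $\D$, since perturbations along $x^l$ grow when $\rho(I-hD^l+hB^l)>1$, which is why the domain of attraction is confined to $\mathcal F$.

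\emph{Local stability of $\bar x:=(\0,\dots,\0,\tilde x^l,\0,\dots,\0)$.} Here I would use Lyapunov's indirect method: form the Jacobian $J$ of the map defining \eqref{eq:disG} at $\bar x$ and show $\rho(J)<1$. The key structural fact — which I would verify by a direct computation — is that, because $x^k=\0$ at $\bar x$ for every $k\ne l$, the block row of $J$ belonging to $x^k$ vanishes off its own diagonal block $J^{kk}=I-hD^k+h(I-\tilde X^l)B^k$; consequently, ordering coordinates with $x^l$ first, $J$ is block upper-triangular with diagonal blocks $J^{ll}=I-hD^l-h\,\mathrm{diag}(B^l\tilde x^l)+h(I-\tilde X^l)B^l$ and the $J^{kk}$, $k\ne l$, so that $\rho(J)=\max\{\rho(J^{ll}),\ \max_{k\ne l}\rho(J^{kk})\}$. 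Now $\rho(J^{ll})<1$ is exactly the local asymptotic stability of the single-virus endemic equilibrium under $\rho(I-hD^l+hB^l)>1$, which I would import from \cite{liu2016onthe}/\cite{dtjournal}. For $k\ne l$, $\0\le I-\tilde X^l\le I$ and $B^k\ge\0$ give $\0\le J^{kk}\le M^k$ with $M^k$ irreducible and $\rho(M^k)\le1$, so $\rho(J^{kk})\le1$; if $\rho(M^k)<1$ that already suffices, and if $\rho(M^k)=1$ then, taking the positive left Perron eigenvector $u$ of $M^k$, $u^\top J^{kk}=u^\top-h\,u^\top\tilde X^l B^k$ is strictly less than $u^\top$ entrywise (because $u\gg\0$, $\tilde x^l\gg\0$, and no column of the irreducible $B^k$ vanishes), which forces $\rho(J^{kk})<1$ for the nonnegative matrix $J^{kk}$. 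Hence $\rho(J)<1$ and $\bar x$ is locally asymptotically stable.

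\emph{Main obstacle.} The hard part is this last step. Two things need care: (i) recognizing that the coupling at $\bar x$ is one-directional — each extinct virus has zero linearized sensitivity to perturbations in $x^l$ — so that $J$ is block-triangular and the analysis decouples, rather than facing the full $mn\times mn$ Jacobian; and (ii) obtaining \emph{strict} Schur stability of the extinct-virus blocks $J^{kk}$ in the borderline case $\rho(I-hD^k+hB^k)=1$, where monotonicity of the spectral radius gives only $\rho(J^{kk})\le1$ and one must exploit the strict decrease along the Perron direction. A secondary dependency is that $\rho(J^{ll})<1$ is inherited from the single-virus analysis rather than reproved here; should that be available only in Lyapunov-function form, local stability of $\bar x$ could instead be concluded by a cascade argument — the extinct viruses contract exponentially and then feed a vanishing perturbation into the already-stable $x^l$ subsystem.
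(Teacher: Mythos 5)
Your proof is correct, and its first two parts (existence of the two equilibria by restricting to $x^j=\0$ for $j\neq l$ and invoking the single-virus endemic result, and attractivity of $\0$ on the face $\{x^l=\0\}$ by applying Theorem~\ref{thm:0global} to the reduced $(m-1)$-virus system) coincide with the paper's argument. You diverge in two substantive ways. First, you actually establish that these are the \emph{only} equilibria, via the monotonicity chain $\hat M^k\le M^k$, $\rho(\hat M^k)\le\rho(M^k)\le 1$, and the Perron--Frobenius rigidity that a dominated nonnegative matrix attaining the spectral radius of an irreducible dominating one must equal it; the paper merely writes that ``the existence of the equilibria follows'' from Theorem~\ref{thm:0global} and Proposition~\ref{prop:endemic} and never rules out other equilibria, so your argument fills a real gap in the stated claim. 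Second, for local stability of $(\0,\dots,\0,\tilde x^l,\0,\dots,\0)$ the paper does not linearize at all: it notes that the extinct viruses converge to $\0$ by Theorem~\ref{thm:0global}, rewrites the $x^l$-dynamics as the single-virus system driven by the vanishing perturbation $-h\sum_{k\neq l}X^kB^lx^l$, and cites Theorem~2 of \cite{prasse2019viral} for convergence to $\tilde x^l$ --- precisely the cascade fallback you sketch at the end of your proposal. Your Jacobian route is more self-contained and quantitative: the block upper-triangular structure, the bound $J^{kk}\le M^k$, and the left-Perron-vector argument for the borderline blocks with $\rho(M^k)=1$ yield $\rho(J)<1$ and hence local exponential convergence, at the cost of importing Schur stability of the single-virus endemic block $J^{ll}$ from the single-virus literature. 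The paper's vanishing-perturbation route avoids computing $J$ but is less explicit about why the perturbed trajectory remains in the domain of attraction of $\tilde x^l$ while the perturbation decays; on that point your version is the tighter of the two.
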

\begin{proof}
  The existence of the equilibria follows from Theorem \ref{thm:0global} and Proposition \ref{prop:endemic}. The asymptotically stability of $\0$ with domain of attraction equal to $\{(x^1,\dots,x^m )| x^l = \0 \text{ and } x^k \in [0,1]^n \ \forall k \neq l\}$ follows directly from Theorem \ref{thm:0global} since virus $l$, which is the only virus with $\rho(I-hD^l+hB^l)> 1$, is always equal to zero.
  
  From the proof of Theorem~\ref{thm:0global}, $x^{k}[t]$ will asymptotically converge to $\0$
as $t\rightarrow\infty$ for all initial values $(x^{1}[0],\dots x^m[0]) \in \{(x^1,\dots,x^m )| x^l = \0 \text{ and } x^k \in [0,1]^n \ \forall k \neq l\}$, for $k\neq l$.
From \eqref{eq:disG},
$$x^{l}[t+1] = x^l[t] + h( B^{l}- D^{l} - X^{l}B^{l}) x^{l}[t] - h\sum_{k\neq l} X^{k}B^{k}x^{k}[t].$$
Thus, we  regard the dynamics of $x^{l}[t]$ as an autonomous system
\begin{equation}
    x^{l}[t+1] = x^l[t] + h(B^{l}- D^{l} - X^{l}(t)B^{l}) x^{l}(t),\label{temp}
\end{equation}
with a vanishing perturbation $- h\sum_{k\neq l} X^{k}(t)B^{l}x^{l}(t)$, which converges to $\0$  as $t\rightarrow\infty$.
Therefore, from Theorem 2 in \cite{prasse2019viral}, 
the autonomous system \eqref{temp} will locally asymptotically converge to
the unique epidemic state $\tilde x^{l}$. Thus \eqref{eq:disG} will converge to $(\0, \dots, \0, \tilde x^{l}, \0, \dots, \0)$. 
\end{proof}

From Theorem \ref{thm:0global} and Proposition \ref{prop:endemic}, we have the following result.

\begin{theorem}
Under Assumptions \ref{x0}-\ref{connect}, the healthy state is the unique equilibrium of \eqref{eq:disG} if and only if
 $\rho(I-hD^k+hB^k)\leq~1$ for all $k\in [m]$.
\label{thm:eq0}
\end{theorem}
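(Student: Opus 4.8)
The statement is a biconditional, so the plan is to handle the two implications separately, and in both directions the work has essentially been done already in Theorem~\ref{thm:0global} and Proposition~\ref{prop:endemic} (equivalently Corollary~\ref{cor:1survive}); the proof should be short and mostly a matter of bookkeeping. Throughout, an equilibrium is understood to be a point of $\D$, which is legitimate since, by Lemma~\ref{lem:box}, $\D$ is positively invariant and the analysis of \eqref{eq:disG} is confined to $\D$.

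For the ``if'' direction, I would assume $\rho(I-hD^k+hB^k)\le 1$ for all $k\in[m]$ and invoke Theorem~\ref{thm:0global}: the healthy state is asymptotically stable with domain of attraction $\D$. Now suppose $\bar x = (\bar x^1,\dots,\bar x^m)\in\D$ is any equilibrium. Then the trajectory of \eqref{eq:disG} initialized at $x[0]=\bar x$ is constant, $x[t]=\bar x$ for all $t\ge 0$; but since $\bar x\in\D$, Theorem~\ref{thm:0global} forces $x[t]\to\0$ as $t\to\infty$, hence $\bar x=\0$. Therefore $\0$ is the only equilibrium.

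For the ``only if'' direction I would argue by contraposition: suppose $\rho(I-hD^l+hB^l)>1$ for some $l\in[m]$. Inspecting the proof of Proposition~\ref{prop:endemic}, the construction of the single-virus endemic point only uses the condition $\rho(I-hD^l+hB^l)>1$ for that particular index $l$ (together with Assumptions~\ref{x0}--\ref{connect}): restricting to the invariant subspace where $x^k=\0$ for all $k\ne l$, the condition $\rho(I-hD^l+hB^l)>1 \Leftrightarrow h\,s_1(-D^l+B^l)>0$ puts us exactly in the setting of Proposition~3 of \cite{liu2016onthe}, which yields $\tilde x^l\gg\0$ with $h\big((-D^l+B^l)-\tilde X^l B^l\big)\tilde x^l=\0$. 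Hence $(\0,\dots,\0,\tilde x^l,\0,\dots,\0)$ is an equilibrium of \eqref{eq:disG} belonging to $\D$ and, since $\tilde x^l\gg\0$, it is distinct from the healthy state, so $\0$ is not the unique equilibrium.

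The only point requiring care — and the ``main obstacle,'' though it is a mild one — is precisely that last observation: one must check that the endemic equilibrium in Proposition~\ref{prop:endemic} is produced coordinate-block by coordinate-block, so that a violation of the spectral condition for a \emph{single} virus already suffices to break uniqueness, rather than needing $\rho(I-hD^k+hB^k)>1$ for all $k$ as in the statement of Proposition~\ref{prop:endemic}. Everything else is a direct application of the previously established asymptotic-stability and existence results.
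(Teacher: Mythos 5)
Your proposal is correct and follows essentially the same route as the paper, which states Theorem~\ref{thm:eq0} as a direct consequence of Theorem~\ref{thm:0global} (the domain of attraction $\D$ rules out any nonzero equilibrium) and Proposition~\ref{prop:endemic} (the blockwise construction of a single-virus endemic equilibrium), giving no further explicit proof. Your added remark that the construction in Proposition~\ref{prop:endemic} applies virus-by-virus, so that one violated spectral condition already breaks uniqueness, is exactly the detail the paper leaves implicit, and you handle it correctly.
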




\section{Learning Spread Parameters}\label{sec:id}

In this section, we clearly lay out the assumptions and the identification techniques for the multi-virus model. For this section we use a slightly different version of \eqref{eq:disG}, where we factor $\beta^k_{ij}$ into $\beta^k_i a^k_{ij}$ as
\begin{equation}\label{eq:disM}
     x^{k}[t+1] = x^k[t] + h((I-X^1-\cdots - X^m)B^kA^k-D^k)x^k[t],
\end{equation}
\vspace{-4ex}

\noindent
where $B^k = diag(\beta^k_i)$ and $A^k$ is the matrix of $a^k_{ij}$. 
\begin{remark}
If the system has homogeneous spread parameters, that is, $\beta^k_i = \beta^k_j$ and $\delta^k_i = \delta^k_j$ for all $i,j\in [n]$, the condition in Theorems \ref{thm:0global}-\ref{thm:eq0} reduces to $\rho(A) \leq \frac{\delta^k}{\beta^k}$.
\label{rem:thres}
\end{remark}
We start by assuming that the underlying graph structures $A^k$ are known and that we have full-state measurement with no noise on the measurements, which we admit are strong assumptions. 
However, for the dataset used in Section \ref{sec:usda} 
these assumptions are well-founded  because we aggregate the data by county and the adjacency of counties is known, i.e., the graph structure is known, and  any farmer that received a subsidy payout is in the dataset, i.e., there are no hidden, unmeasured states. 
We will relax the no-noise assumption in the Simulations Section (see Section~\ref{sec:sim}).


We present several results on learning the spread parameters of the model in \eqref{eq:dis} from data. The following result is an improvement of Theorem 3 in \cite{dtjournal}.
\begin{theorem}\label{thm:idhomo}
Consider the model in \eqref{eq:disG} under Assumptions \ref{x0}-\ref{connect} with 
 virus $k$ having homogeneous spread, that is, 
 $\beta^k$ and $\delta^k$ are the same for all agents. 
Assume that  $A^k$, $x^k[t]$, for all $t \in [T] \cup \{0\}, k\in [m]$,  and $h$ are known. Then, $\beta^k$ and $\delta^k$ can be identified uniquely if and only if $T>0$, and 
there exist $i,j \in [n]$ and $ t_1, t_2\in[T-1] \cup \{0\}$ such that
\begin{align}
x_i^{k}[t_1] g_j(x^{k}[t_2])
\label{eq:inv1}
\neq x_j^{k}[t_2] g_i(x^{k}[t_1]), 
\end{align}
where $g(x^{k}[t]):=(I-X^1[t]-\cdots - X^m[t])A^k x^k[t]$.
\end{theorem}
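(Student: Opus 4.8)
The plan is to recast identification as a linear problem in the unknown pair $(\beta^{k},\delta^{k})$ and show that unique solvability is equivalent to a rank-two condition that unpacks into \eqref{eq:inv1}. Because virus $k$ has homogeneous spread, writing \eqref{eq:disG} componentwise and rearranging yields, for every $i\in[n]$ and every $t\in[T-1]\cup\{0\}$,
\begin{equation*}
\frac{x_i^{k}[t+1]-x_i^{k}[t]}{h} \;=\; \beta^{k}\, g_i(x^{k}[t]) \;-\; \delta^{k}\, x_i^{k}[t],
\end{equation*}
which is \emph{linear} in $\theta:=(\beta^{k},\delta^{k})^{\top}$, since $g_i(x^{k}[t])$ and $x_i^{k}[t]$ are computable from the known quantities $A^{k}$ and $x^{1}[t],\dots,x^{m}[t]$. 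Stacking these equations over all $i$ and all $t\in[T-1]\cup\{0\}$ gives a linear system $M\theta=b$, with the row of $M$ indexed by $(i,t)$ equal to $\big(g_i(x^{k}[t]),\,-x_i^{k}[t]\big)$ and the corresponding entry of $b$ equal to $(x_i^{k}[t+1]-x_i^{k}[t])/h$. Since the observed sequence is an actual trajectory of \eqref{eq:disG}, the true parameters satisfy $M\theta=b$, so the system is consistent; hence $(\beta^{k},\delta^{k})$ is uniquely identifiable if and only if $M$ has full column rank, i.e. $\mathrm{rank}(M)=2$.

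First I would dispose of the boundary case: if $T=0$ no difference equation is available, $M$ has no rows, and every $\theta$ solves $M\theta=b$, so the parameters are not uniquely identifiable --- this forces $T>0$. For $T>0$, $M$ has exactly two columns, so $\mathrm{rank}(M)\le 2$, and $\mathrm{rank}(M)=2$ if and only if some $2\times2$ submatrix of $M$ is nonsingular. The submatrix built from rows $(i,t_1)$ and $(j,t_2)$ has determinant
\begin{equation*}
x_i^{k}[t_1]\, g_j(x^{k}[t_2]) \;-\; x_j^{k}[t_2]\, g_i(x^{k}[t_1]),
\end{equation*}
which is nonzero exactly when \eqref{eq:inv1} holds for those indices. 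Thus $\mathrm{rank}(M)=2$ is equivalent to the existence of $i,j\in[n]$ and $t_1,t_2\in[T-1]\cup\{0\}$ satisfying \eqref{eq:inv1}; conversely, if \eqref{eq:inv1} fails for all such indices, then every $2\times2$ minor of $M$ vanishes, so $\mathrm{rank}(M)\le1$, $\ker M\neq\{\0\}$, and the solution set $\theta+\ker M$ is infinite --- not uniquely identifiable. Combined with the consistency observation, this yields both directions of the "if and only if".

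The arithmetic is routine; the points that need care are (i) verifying that homogeneity genuinely linearizes the update --- the coupling factor $(I-X^1-\cdots-X^m)$ involves every virus, but it is fully observed and gets absorbed into $g$, so it causes no trouble; (ii) the consistency argument, which rules out the inconsistent case and lets me conclude that rank deficiency means non-uniqueness rather than infeasibility; and (iii) the degenerate case $T=0$. I expect (ii) to be the only conceptually delicate step, though it follows immediately from the data being model-generated.
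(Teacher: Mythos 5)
Your proposal is correct and follows essentially the same route as the paper: recast the update as a linear system in $(\beta^k,\delta^k)$ with regressor rows $\bigl(g_i(x^k[t]),\,-x_i^k[t]\bigr)$, and observe that unique identifiability is equivalent to full column rank, which holds exactly when some $2\times 2$ minor --- precisely the expression in \eqref{eq:inv1} --- is nonzero. Your treatment is in fact slightly more careful than the paper's on the consistency of the linear system and the degenerate case $T=0$, but the argument is the same.
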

{\em Proof:}
Since $x^k[0], \dots, x^k[T-1]$,  and $A^k$ are known, using 
\eqref{eq:disM} we can construct the matrix $\Phi^k$, defined as,
\begin{equation}\label{eq:phi}\footnotesize
     \begin{bmatrix} (I-X^1[0]-\cdots - X^m[0])A^k x^k[0] & -x^k[0]\\ \vdots & \vdots \\ (I-X^1[T-1]-\cdots - X^m[T-1])A^k x^k[T-1] & -x^k[T-1] \end{bmatrix}.
\end{equation}\normalsize
Therefore, since we also know $x^k[T]$ and $h$, we can rewrite \eqref{eq:disG} as 
\begin{equation}\label{eq:id1}
    \begin{bmatrix} x^k[1]-x^k[0]\\ \vdots \\ x^k[T]-x^k[T-1] \end{bmatrix} =  h \Phi^k \begin{bmatrix} \beta^k \\ \delta^k \end{bmatrix}.
\end{equation}
Since $n>1$, $\Phi^k$ has at least two rows. 
By the assumption that there exist $i,j \in [n]$ and  $ t_1, t_2\in[T-1] \cup \{0\}$ such that 
\eqref{eq:inv1} holds, 
$\Phi^k$ has 
column rank equal to two, with two unknowns. 
Therefore there exists a unique solution to \eqref{eq:id1} using the inverse or pseudoinverse.

If there do not exist $i,j \in [n]$ and  $ t_1, t_2\in[T-1] \cup \{0\}$ such that \eqref{eq:inv1} holds, then $\Phi^k$ has a nontrivial nullspace. Therefore \eqref{eq:id1} does not have a unique solution. 
\hfill
$\qed$

Note that $t_1$ and $t_2$ from Theorem \ref{thm:idhomo} could both equal zero and the condition in \eqref{eq:inv1} could still hold, that is, recovery of the spread parameters may be possible with only two time series points. 
Now we present two corollaries where $h\beta^k$ and $h\delta^k$, denoted by $\beta^k_h$ and $\delta^k_h$, respectively, can be recovered. 
\begin{corollary}\label{cor:h}
 Consider the model in \eqref{eq:disG} under Assumptions \ref{x0}-\ref{connect} 
 with homogeneous virus spread. 
   Assume that $A^k$ and $x^k[0], \dots, x^k[T]$ are known. Then, $\beta^k_h$ and $\delta^k_h$ can be identified uniquely for every $k \in [m]$ if and only if $T>0$ and 
there exist $i,j \in [n]$ and $ t_1, t_2\in[T-1] \cup \{0\}$ such that
$
x_i^{k}[t_1] g_j(x^{k}[t_2])
\neq x_j^{k}[t_2] g_i(x^{k}[t_1]). $ 
\end{corollary}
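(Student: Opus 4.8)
The plan is to follow the proof of Theorem~\ref{thm:idhomo} almost verbatim, exploiting the fact that the sampling parameter $h$ need never be disentangled from the spread rates: it can simply be absorbed into the unknowns. First I would form exactly the same regressor matrix $\Phi^k$ as in~\eqref{eq:phi}; note that $\Phi^k$ is built entirely from the known data $A^k$ and $x^k[0],\dots,x^k[T-1]$ and does not involve $h$. Writing~\eqref{eq:disM} row by row for $t = 0,\dots,T-1$ under homogeneity ($\beta^k_i \equiv \beta^k$, $\delta^k_i \equiv \delta^k$), and collecting the known left-hand differences $x^k[t+1] - x^k[t]$, yields the linear system
\begin{equation*}
    \begin{bmatrix} x^k[1]-x^k[0]\\ \vdots \\ x^k[T]-x^k[T-1] \end{bmatrix} = \Phi^k \begin{bmatrix} \beta^k_h \\ \delta^k_h \end{bmatrix},
\end{equation*}
which is identical to~\eqref{eq:id1} except that the scalar $h$ has been moved inside the parameter vector, so that the two unknowns are now $\beta^k_h = h\beta^k$ and $\delta^k_h = h\delta^k$.

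For the ``if'' direction I would invoke the rank argument already used in Theorem~\ref{thm:idhomo}: since $n > 1$, $\Phi^k$ has at least two rows, and the hypothesis that there exist $i,j \in [n]$ and $t_1, t_2 \in [T-1]\cup\{0\}$ with $x_i^k[t_1]\, g_j(x^k[t_2]) \neq x_j^k[t_2]\, g_i(x^k[t_1])$ forces the two columns of $\Phi^k$ to be linearly independent, hence $\Phi^k$ has full column rank two. The system then has the unique solution $(\beta^k_h, \delta^k_h)^\top$ obtained via the inverse or pseudoinverse of $\Phi^k$; since this holds for each $k \in [m]$ separately, all the products are identifiable. For the ``only if'' direction: if $T = 0$ there is no equation, and if no such $i,j,t_1,t_2$ exist then, by the same $2\times 2$ determinant computation as in Theorem~\ref{thm:idhomo}, the two columns of $\Phi^k$ are proportional, so $\Phi^k$ has a nontrivial nullspace and $(\beta^k_h, \delta^k_h)$ is not pinned down uniquely.

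I do not expect a serious obstacle here; the only point that genuinely differs from Theorem~\ref{thm:idhomo} is the bookkeeping observation that $h$ is not required as an input. The mild subtlety to spell out is that the factorization $\beta^k_{ij} = \beta^k_i a^k_{ij}$ combined with homogeneity is precisely what lets a single scalar $\beta^k_h$ (and likewise $\delta^k_h$) absorb $h$ without changing the regressor $\Phi^k$ or the observed left-hand side, so the identifiability condition is exactly the same as before, now stated with $h$ unknown.
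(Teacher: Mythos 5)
Your proposal is correct and is exactly the argument the paper intends: the corollary is left without an explicit proof precisely because it follows from Theorem~\ref{thm:idhomo} by the bookkeeping observation you make, namely that $h$ never enters the regressor $\Phi^k$ in \eqref{eq:phi} and can therefore be absorbed into the unknowns $\beta^k_h = h\beta^k$ and $\delta^k_h = h\delta^k$, after which the same rank-two argument applies. No gaps.
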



This corollary illustrates that under certain conditions, while the exact behavior of the system may not be recoverable 
the limiting behavior of the system may be determined, by employing Theorems \ref{thm:0global}-\ref{thm:eq0} with Remark \ref{rem:thres}.

If the assumption is made that the underlying spread process is heterogeneous, 
we have a similar condition, an improvement of Theorem 4 in \cite{dtjournal}. 

\vspace{-1ex}

\begin{theorem} \label{thm:idhetero}
Consider the model in \eqref{eq:dis} under Assumptions \ref{x0}-\ref{connect}. 
Assume that $x^k[t]$, $x_i^l[t]$ for all $ t\in[T-1] \cup \{0\}, l\in[m]$, $A^k$, $x^k_i[T]$, and $h$ are known. Then, the spread parameters of virus $k$ for node $i$ can be identified uniquely if and only if $T>1$, and 
there exist $ t_1, t_2\in[T-1] \cup \{0\}$ such that
\vspace{-2ex}
\begin{align}
x_i^{k}[t_1] &(1-x_{i}^{1}[t_2]-\cdots - x_i^m[t_2])\sum^{n}_{j=1} a^k_{ij}x_{j}^{k}[t_2] \label{eq:inv}
\neq x_i^{k}[t_2] (1-x_{i}^{1}[t_1]-\cdots - x_i^m[t_1])\sum^{n}_{j=1} a^k_{ij}x_{j}^{k}[t_1] . 
\end{align}
\end{theorem}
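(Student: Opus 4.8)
\emph{Proof plan.} The plan is to reduce the identification of the pair $(\beta_i^k,\delta_i^k)$ to solving a consistent linear system, exactly in the spirit of the proof of Theorem \ref{thm:idhomo}, but localized to the single node $i$. First I would write the scalar update \eqref{eq:dis} with $\beta_{ij}^k=\beta_i^k a_{ij}^k$, for node $i$ and virus $k$, at each time $t\in[T-1]\cup\{0\}$, grouping the two unknowns:
\[
x_i^k[t+1]-x_i^k[t]=h\bigl(g_i(x^k[t])\,\beta_i^k-x_i^k[t]\,\delta_i^k\bigr),
\qquad g_i(x^k[t])=(1-x_i^1[t]-\cdots-x_i^m[t])\textstyle\sum_{j=1}^n a_{ij}^k x_j^k[t].
\]
Here $g_i(x^k[t])$ is computable from the hypothesized data, since it needs $A^k$, the full profile $x^k[t]$ (to form the sum), and $x_i^l[t]$ for $l\in[m]$ (for the susceptible factor); and the left-hand side at $t=T-1$ uses the separately given $x_i^k[T]$. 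Stacking the $T$ equations gives $y=h\,\Psi_i^k\theta$ with $\theta=[\beta_i^k\ \ \delta_i^k]^{\top}$ and
\[
\Psi_i^k=\begin{bmatrix} g_i(x^k[0]) & -x_i^k[0]\\ \vdots & \vdots\\ g_i(x^k[T-1]) & -x_i^k[T-1]\end{bmatrix}\in\R^{T\times2}.
\]
Because the data is generated by the model, this system is consistent, so unique identifiability of $\theta$ is equivalent to $\Psi_i^k$ having full column rank two, i.e.\ trivial nullspace (this is the same reduction used around \eqref{eq:id1}).

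Next I would show the rank condition is precisely \eqref{eq:inv}. The $2\times2$ minor of $\Psi_i^k$ on rows $t_1,t_2$ equals $x_i^k[t_1]\,g_i(x^k[t_2])-x_i^k[t_2]\,g_i(x^k[t_1])$, and a matrix with two columns has rank two if and only if some such minor is nonzero. Hence full column rank of $\Psi_i^k$ is equivalent to the existence of $t_1,t_2\in[T-1]\cup\{0\}$ with that minor nonzero, which (writing $g_i$ out) is exactly \eqref{eq:inv}. This yields sufficiency ($\Leftarrow$): if $T>1$ and \eqref{eq:inv} holds for some $t_1,t_2$, then $\Psi_i^k$ has trivial nullspace, so $\theta=\tfrac1h(\Psi_i^k)^{\dagger}y$ is the unique solution.

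For necessity ($\Rightarrow$) I would use the contrapositive. If $T=1$, then $\Psi_i^k$ is $1\times2$ with rank at most one, so the solution set of $y=h\Psi_i^k\theta$ is a nontrivial affine set and $\theta$ is not unique. If instead \eqref{eq:inv} fails for every pair $t_1,t_2\in[T-1]\cup\{0\}$ (it fails automatically when $t_1=t_2$, so the content is the $t_1\neq t_2$ cases), then every $2\times2$ minor of $\Psi_i^k$ vanishes, so $\Psi_i^k$ has rank at most one, hence a nontrivial nullspace, and $\theta$ is again non-unique. Combining the two directions gives the claimed equivalence.

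I do not expect a genuine obstacle here; the only care needed is bookkeeping. One must (i) make explicit that the listed data ($x^k[t]$ and $x_i^l[t]$ for $t\le T-1$ and all $l\in[m]$, the adjacency $A^k$, $x_i^k[T]$, and $h$) is exactly what is required to form $\Psi_i^k$ and $y$; and (ii) explain why restricting to the fixed node $i$ forces the two time indices in the rank certificate to be distinct, which is precisely what raises the sample-size threshold from $T>0$ in Theorem \ref{thm:idhomo} to $T>1$ in this heterogeneous statement. The underlying linear-algebra fact — full column rank is equivalent to a nonvanishing $2\times2$ minor — is routine.
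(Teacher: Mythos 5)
Your proposal is correct and follows essentially the same route as the paper's proof: both stack the scalar node-$i$ update into the linear system \eqref{eq:id2} with the same $T\times 2$ regressor matrix (your $\Psi_i^k$ is the paper's $\Phi_i^k$) and identify unique recoverability with full column rank, certified by the nonvanishing $2\times 2$ minor in \eqref{eq:inv}. Your treatment is in fact slightly more explicit than the paper's on the necessity direction (the $T=1$ case and the all-minors-vanish argument), but there is no substantive difference in approach.
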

\begin{proof}
Since $x^k[t]$, $x_i^l[t]$ for all $ t\in[T-1] \cup \{0\}, l\in[m]$,  and $A^k$ are known, we can construct the matrix $\Phi^k_i$, defined as,
\vspace{-1ex}
\begin{equation*}\footnotesize
     \begin{bmatrix} \displaystyle (1-x_{i}^{1}[0]-\cdots - x_i^m[0]) \sum^{n}_{j=1} a^k_{ij}x_{j}^{k}[0]  & \displaystyle -x_{i}^{k}[0]\\ \vdots & \vdots \\ 
     \displaystyle (1-x_{i}^{1}[T-1]-\cdots - x_i^m[T-1]) \sum^{n}_{j=1} a^k_{ij}x_{j}^{k}[T-1]  & \displaystyle -x_{i}^{k}[T-1] \end{bmatrix}.
\end{equation*}\normalsize

\vspace{-1ex}

\noindent
Then, since we also know $x_i^k[T]$ and $h$, we have 
\vspace{-1ex}
\begin{equation}\label{eq:id2}
    \begin{bmatrix} \displaystyle x_{i}^{k}[1]-x_{i}^{k}[0]\\ \vdots \\ \displaystyle x_{i}^{k}[T]-x_{i}^{k}[T-1] \end{bmatrix} =  h\Phi_i^k \begin{bmatrix} \beta_{i}^k \\ \delta_{i}^k \end{bmatrix}.
\end{equation}
Since $T>1$, $\Phi^k_i$ has at least two rows. 
By the assumption that there exist $ t_1, t_2\in[T-1] \cup \{0\}$ such that 
\eqref{eq:inv} holds, 
$\Phi^k_i$ has 
column rank equal to two, with two unknowns. 
%
Therefore there exists a unique solution to \eqref{eq:id2} using the inverse or pseudoinverse.

If there do not exist $ t_1, t_2\in[T]$ such that \eqref{eq:inv} holds, then $\Phi^k_i$ has a nontrivial nullspace. Therefore \eqref{eq:id2} does not have a unique solution. 
\end{proof}

\section{Simulations}\label{sec:sim}

In this section, we  present first, a set of simulations that illustrate the results from the previous sections and second, some illuminating simulations of the model that support the validation work with real data. 
Since the dataset we consider in Section \ref{sec:usda} only has two competing spread processes we limit ourselves to $m=2$ for this section as well, however, the behavior is similar for $m>2$.
Virus 1 is depicted by the color red ($r$), virus 2 is depicted by the color green ($g$), and susceptible, or healthy, is depicted by the color blue ($b$). For all $i\in[n]$, the color at each time $t$ for node $i$ is given by 
\begin{equation}\label{eq:color}
x^{1}_i[t]r + x^{2}_i[t]g + (1-x^{1}_i[t]-x^{2}_i[t])b.
\end{equation}
For the second set of simulations we, at times, inspect the case of $m=1$. 

\subsection{Examples of Results}

\begin{figure}
    \centering
    \begin{subfigure}[b]{.32\columnwidth}
      \includegraphics[width=\columnwidth]{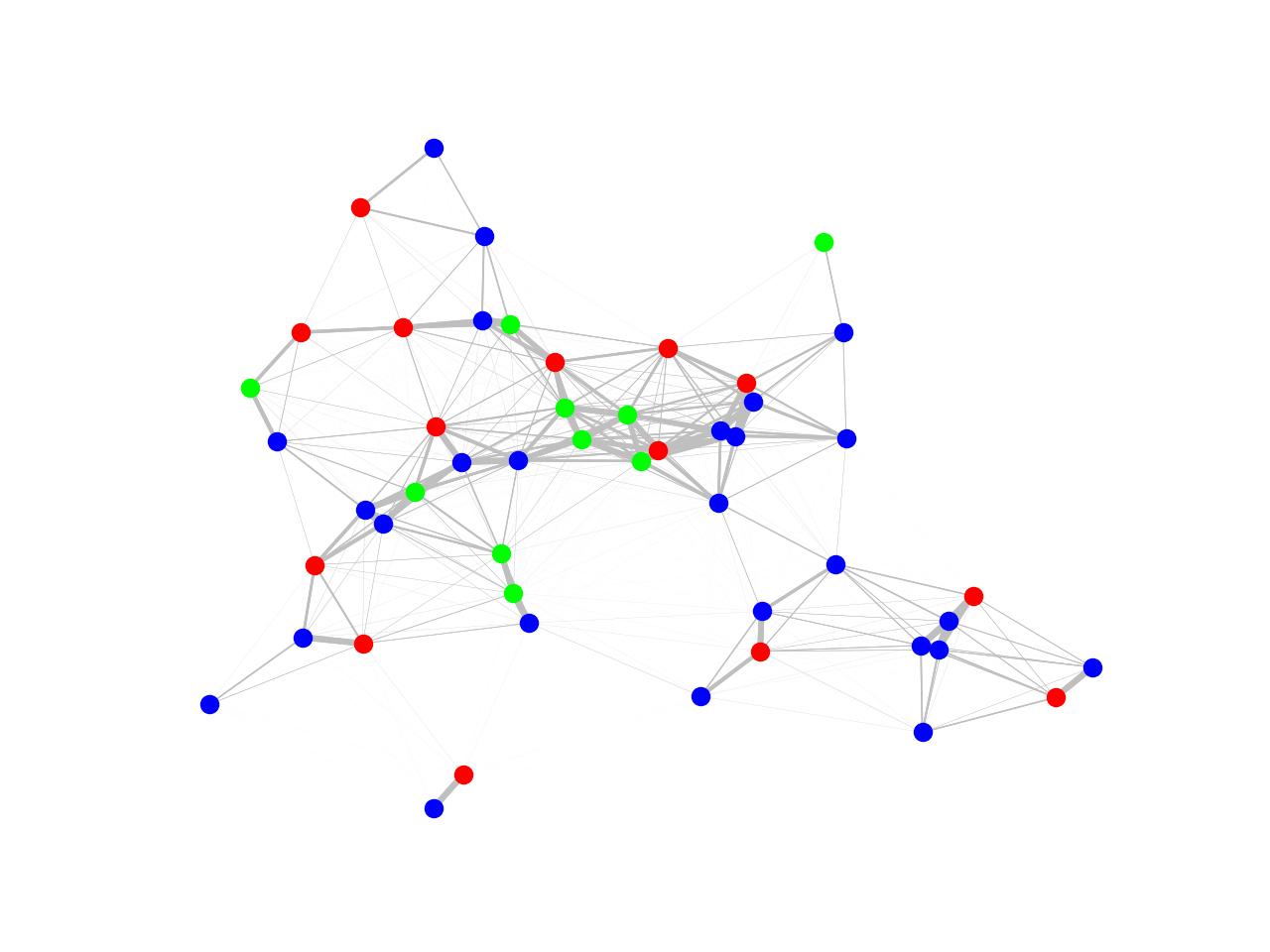}
      \caption{The system at time zero.}
      \label{fig:homo_1}
    \end{subfigure}
    \hfill
    \begin{subfigure}[b]{.32\columnwidth}
      \includegraphics[width=\columnwidth]{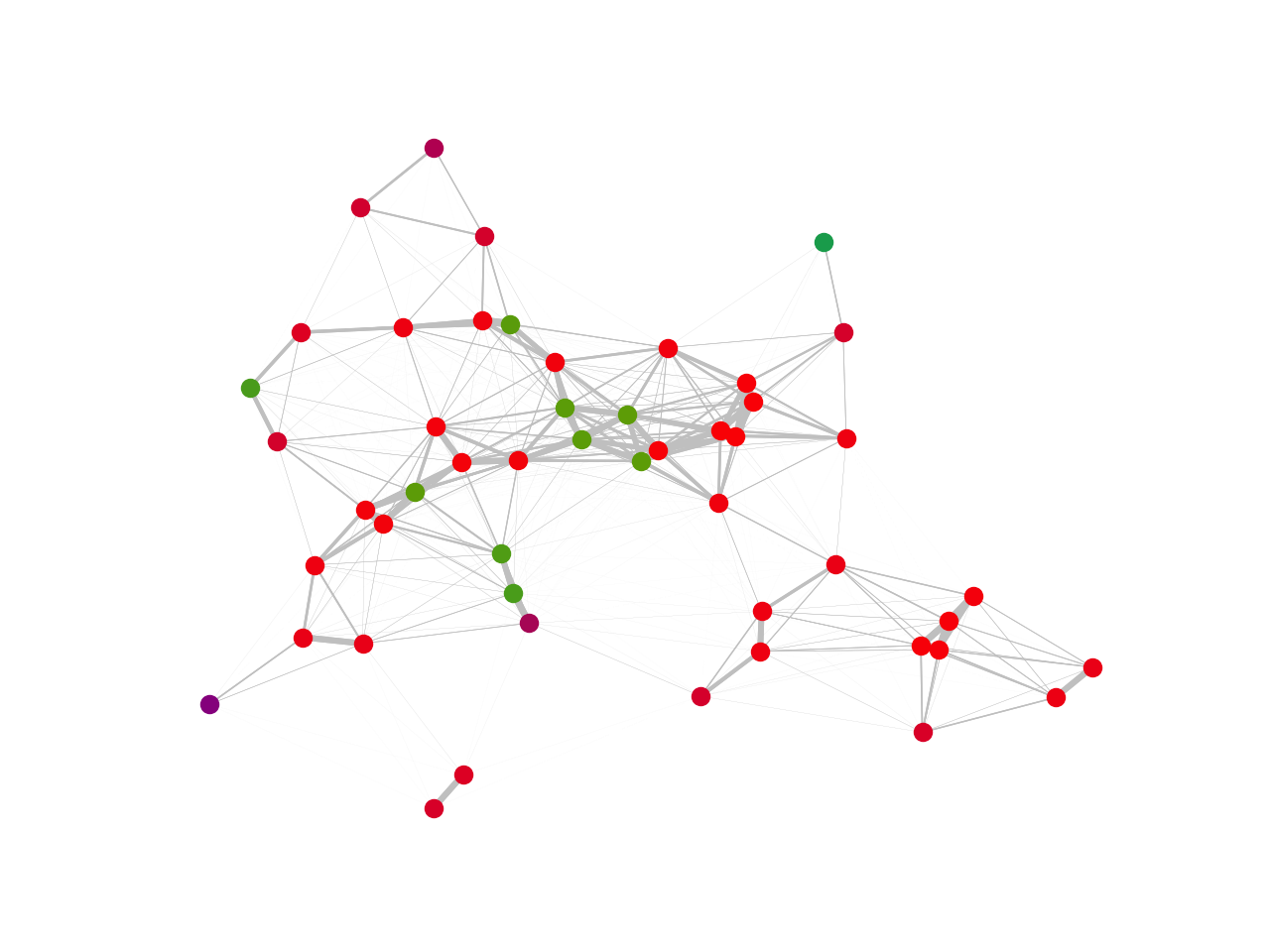}
      \caption{The system at time 100.}
      \label{fig:homo_100}
    \end{subfigure}
    \hfill
    \begin{subfigure}[b]{.32\columnwidth}
      \includegraphics[width=\columnwidth]{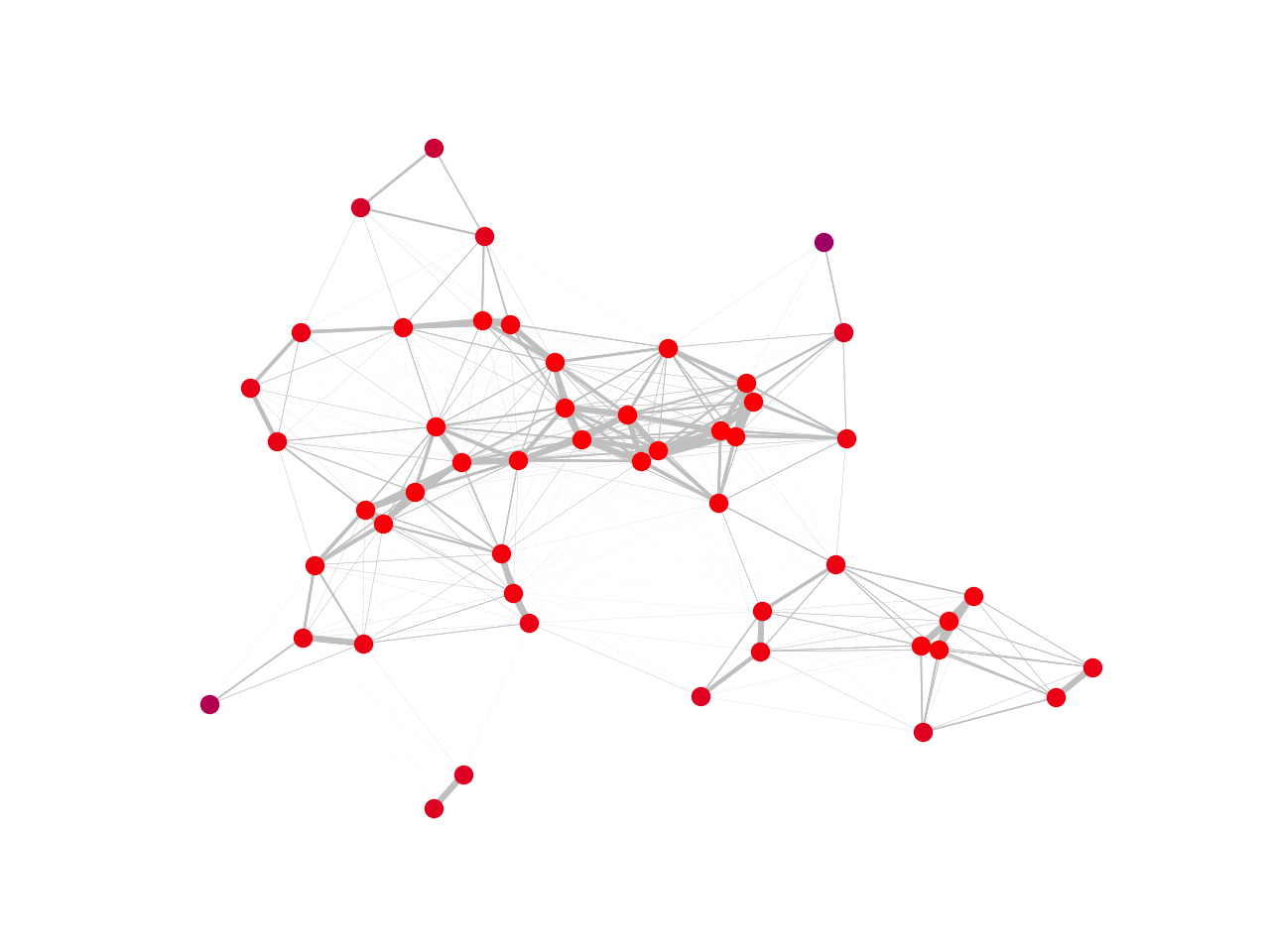}
      \caption{The system at time 10000.}
      \label{fig:homo_10000}
    \end{subfigure}
    \caption{This homogeneous virus system follows \eqref{eq:dis} with $\beta^1 = 1$, $\beta^2 = 0.01$ $\delta^1 = 0.1$, $\delta^2 = 0.1$, $h=0.05$, and $A$ depicted by the edges.
    }
\label{fig:homo}
\end{figure}

We consider a system with $50$ agents $24$ of which are randomly chosen such that they are initially infected by either one of the two competing viruses. For Virus 1, $\beta^1 = 1$ and $\delta^1 = 0.1$ and for the Virus 2, $\beta^2 = 0.01$ and $\delta^2 = 0.1$. Moreover, $h = 0.05$ and the weighted adjacency matrix for both viruses, $A$, is determined by
\begin{equation} \label{eq:A_sim}
    a_{ij} = \begin{cases}
    e^{- \|z_i - z_j\|^2}, & \text{ if } i \neq j, \\ 0, & \text{ otherwise,} \end{cases}
\end{equation}
where $z_i$ is the position of agent $i$ and $A$ is, therefore, fully connected. Since the edges are weighted, the ones between nodes that are far away from each other are difficult to see in the figures. A simulation, based on this system, is shown in Figure~\ref{fig:homo} with plots of the initial condition, the epidemic states at time-step $100$ and the final condition. Assuming $A$ is known we recover $\beta^1_h$, $\delta^1_h$, $\beta^2_h$, and $\delta^2_h$ exactly, using \eqref{eq:id1} with only two time-steps, consistent with Corollary~\ref{cor:h}. Hence, the proportions  $\delta^1/\beta^1$ and  $\delta^2/\beta^2$ are also correctly recovered.
And clearly, if $h$ is known, we recover the parameters exactly, consistent with Theorem \ref{thm:idhomo}. 
Moreover,
\begin{equation*}
    \rho(I - hD^1 + h\beta^1 A) = 1.1976 > 1,  \text{ and } \rho(I - hD^2 + h\beta^2 A) = 0.997 \leq 1,
\end{equation*}
and consistent with Corollary~\ref{cor:1survive}, the endemic state is $(\tilde{x}^1, \0)$, where $\tilde{x} \gg \0$. We also find that this endemic equilibrium is reached for all initial conditions with $x^1[0] > \0$, that is, via simulations it appears to be globally stable.

\begin{figure}
    \centering
    \begin{subfigure}[b]{.493\columnwidth}
      \includegraphics[width=\columnwidth]{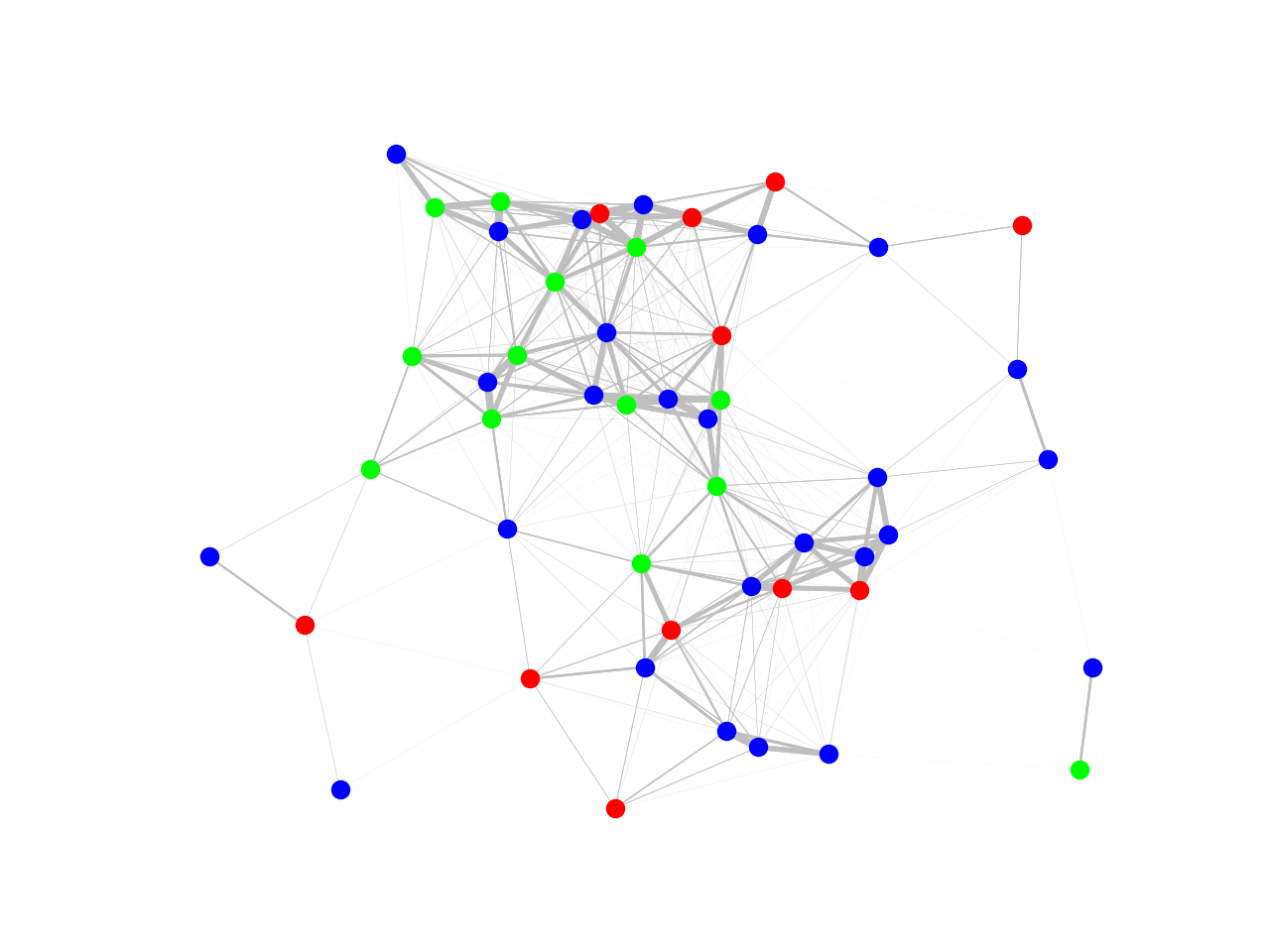}
      \caption{The system at time zero.}
      \label{fig:dis0}
    \end{subfigure}
    \hfill
    \begin{subfigure}[b]{.493\columnwidth}
      \includegraphics[width=\columnwidth]{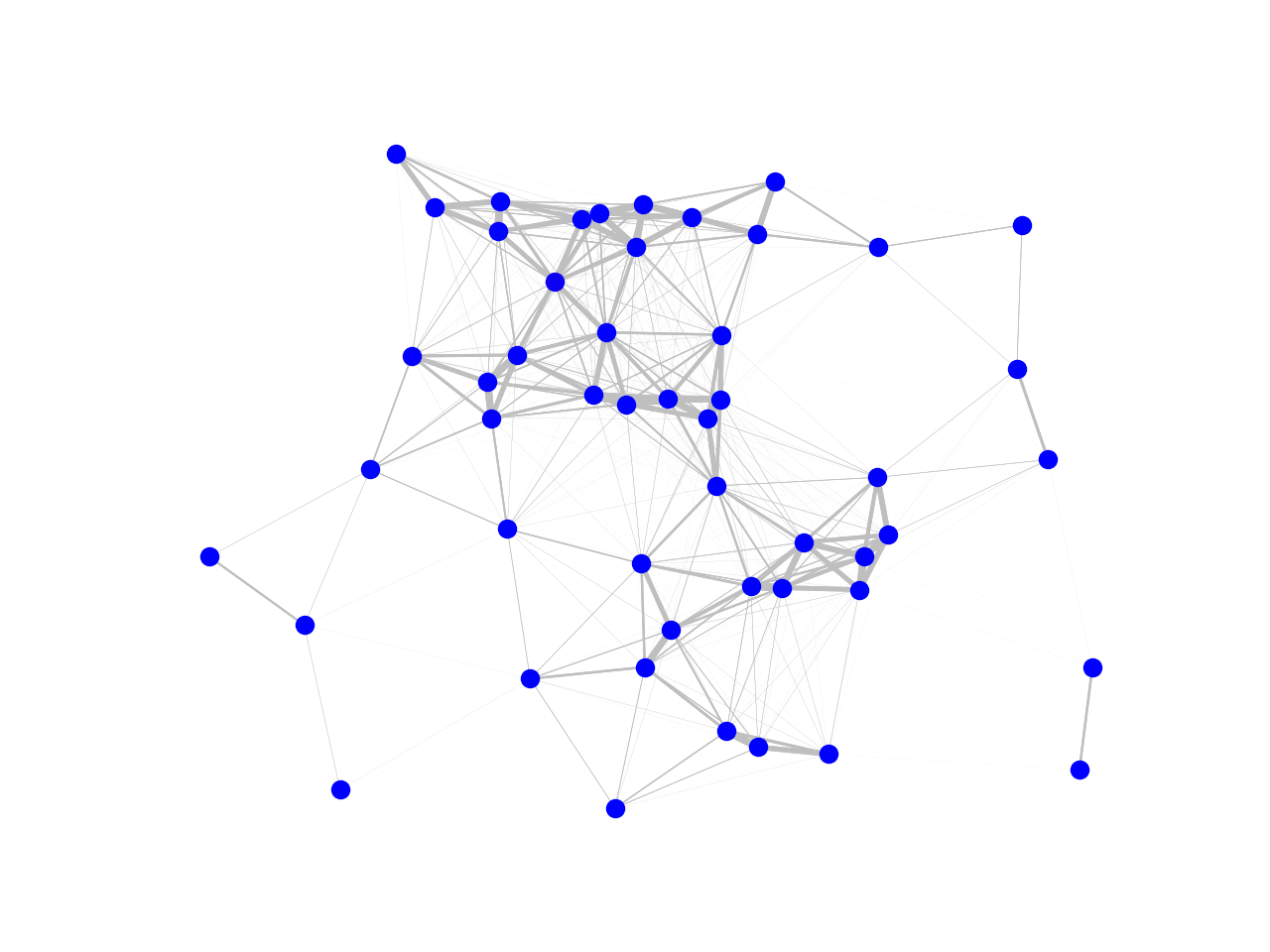}
      \caption{The system at time 100.}
      \label{fig:dis100}
    \end{subfigure}
    \caption{This heterogeneous virus system follows \eqref{eq:dis} with $\beta^k_i \in [0.001, \; 1]$ and $\delta^k_i \in [0.1, \; 10]$ randomly generated $\forall i,k$, $h=0.05$, and $A$ depicted by the edges.
    }
\label{fig:sim_graph}
\end{figure}

We now consider a similar system with $50$ agents $24$ of which are initially infected by either one of the two viruses and $A$ given by \eqref{eq:A_sim}. But the agents have moved and the system is a heterogeneous virus system with $\beta^k_i \in [0.001, \; 1]$ and $\delta^k_i \in [0.1, \; 10]$ randomly generated from uniform distributions for all $i\in[n]$ and $k\in[m]$. For $T = 3$ the assumptions in Theorem~\ref{thm:idhetero} are met and we recover the spread parameters exactly.
Moreover, 
\begin{equation*}
    \rho(I - hD^1 + h\beta^1 A) = 0.9958 \leq 1,  \text{ and }\rho(I - hD^2 + h\beta^2 A) = 0.9851 \leq 1,
\end{equation*}
and we observe that the system converges to the healthy state, $\tilde{x} = \0$, consistent with Theorem~\ref{thm:0global}.

\subsection{Exploratory Simulations} \label{sec:exp}

In this section, we present two set of simulations that give important insight into the model to assist our work on the USDA dataset in the next section. The first simulation explores how accurately we can capture the behavior of a heterogeneous virus system with additive i.i.d. Gaussian noise by using a homogeneous approximation, i.e. recovering the spread parameters by applying \eqref{eq:id1}. The second set of simulations illustrates some interesting behavior regarding the sampling parameter $h$.

\begin{figure}
    \centering
    \includegraphics[width=\linewidth]{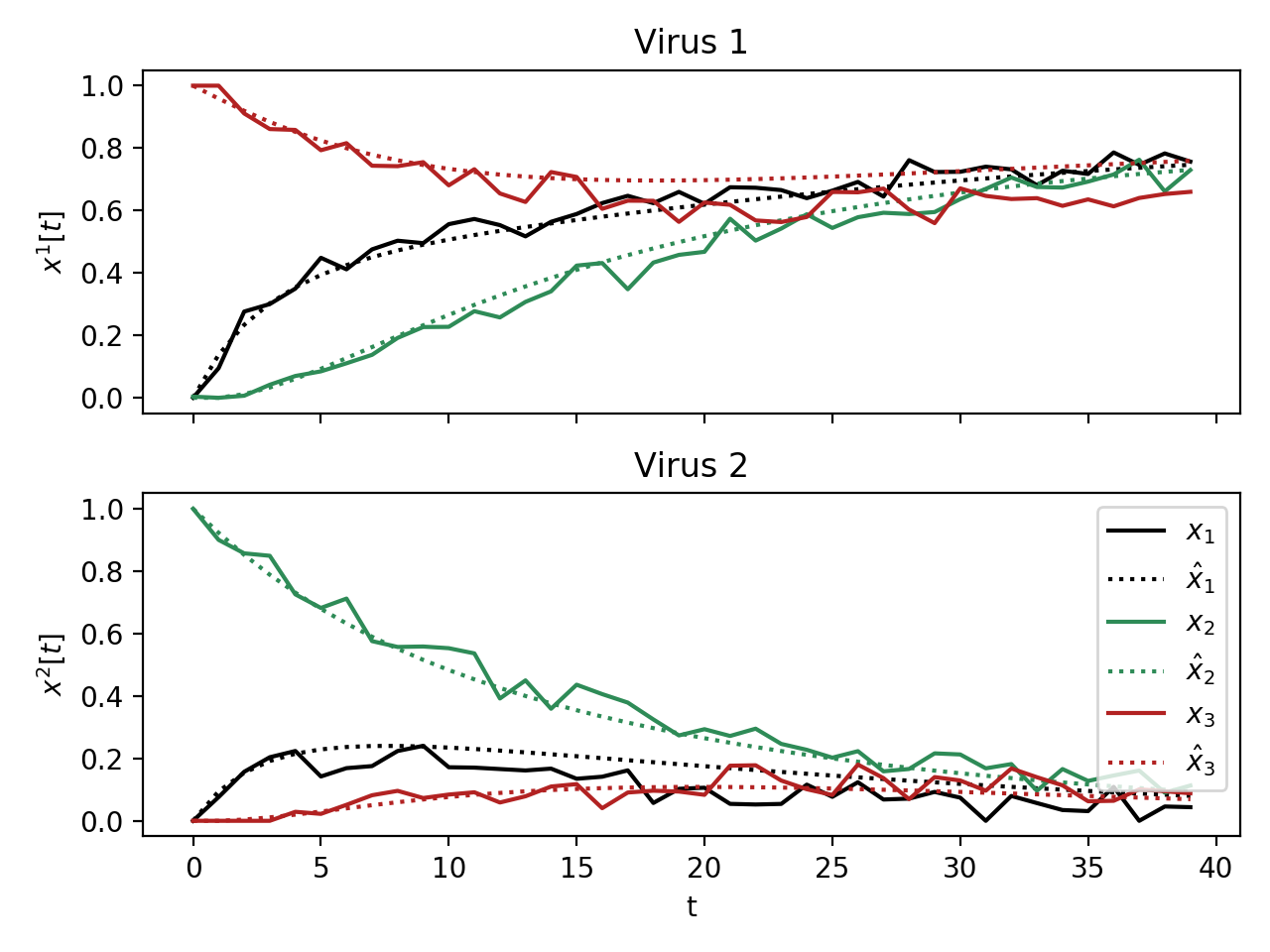}
    \caption{Simulation of the epidemic states of a heterogeneous system with additive i.i.d. Gaussian noise and recovered states using a homogeneous approximation of the system.}
    \label{fig:approx_noise}
\end{figure}

For the first simulation we consider a heterogeneous system with three agents and two viruses, $m = 2$. We set $h=1$,
\vspace{-2ex}
\begin{align*}
    x^1[0] &= \left[0 \; 0 \; 1 \right],
%
    \delta^1 = \left[ 0.05 \; 0.03 \; 0.04\right],
    \beta^1 = \left[ 0.15 \; 0.13 \; 0.08 \right],
    \\
    x^2[0] &= \left[ 0 \; 1 \; 0 \right],
    \delta^2 = \left[ 0.13 \; 0.07 \; 0.08 \right],
    \beta^2 = \left[ 0.09 \; 0.11 \; 0.10 \right]  \text{, and}
    \\
    A^1 &= A^2 = \begin{bmatrix} 0 & 1 & 1 \\ 1 & 0 & 1 \\ 1 & 1 & 0 \end{bmatrix}.
\end{align*}

\noindent We generate 40 time-steps of the epidemic states, $x$, using \eqref{eq:dis} with additive i.i.d Gaussian noise with the standard deviation set to $0.03$. 

To understand how accurately
a heterogeneous system can be approximated by a homogeneous model we use \eqref{eq:id1} with $T=4$
to learn homogeneous spread parameters. The learned parameters are 
\vspace{-1ex}
\begin{equation} \label{eq:learn_sim}
     \begin{bmatrix}  \hat{\delta}^1_h \\ \hat{\beta}^{1}_h \end{bmatrix} = \begin{bmatrix}      
    0.0415 \\ 0.1379 \end{bmatrix} \text{ and } \begin{bmatrix}  \hat{\delta}^{2}_h \\ \hat{\beta}^2_h \end{bmatrix} = \begin{bmatrix}  0.0772 \\  0.0944 \end{bmatrix}.
\end{equation}
\vspace{-1ex}

\noindent The learned parameters in \eqref{eq:learn_sim} are used to recover the generated data-samples, $\hat{x}$, by using \eqref{eq:dis} with homogeneous spread parameters. We compare $x$ and $\hat{x}$ in Figure~\ref{fig:approx_noise} to illustrate how well a homogeneous model can approximate a heterogeneous system with additive noise. We see that, even with noise in the system, the approximation is quite good. 

One can see that  the errors between the recovered states, $\hat{x}^2_1$ and $\hat{x}^1_3$, and the original system,  $x^2_1$ and $x^1_3$, are higher than the rest of the errors.
The decreased accuracy of $\hat{x}^1_3$ can be explained by the difference in magnitude of $\beta^1_3$ 
from the infection rates of the other agents. 
The same applies for $\hat{x}^2_1$ but for the healing rate, $\delta^2_1$.

\begin{figure}
    \centering
    \includegraphics[width=\linewidth]{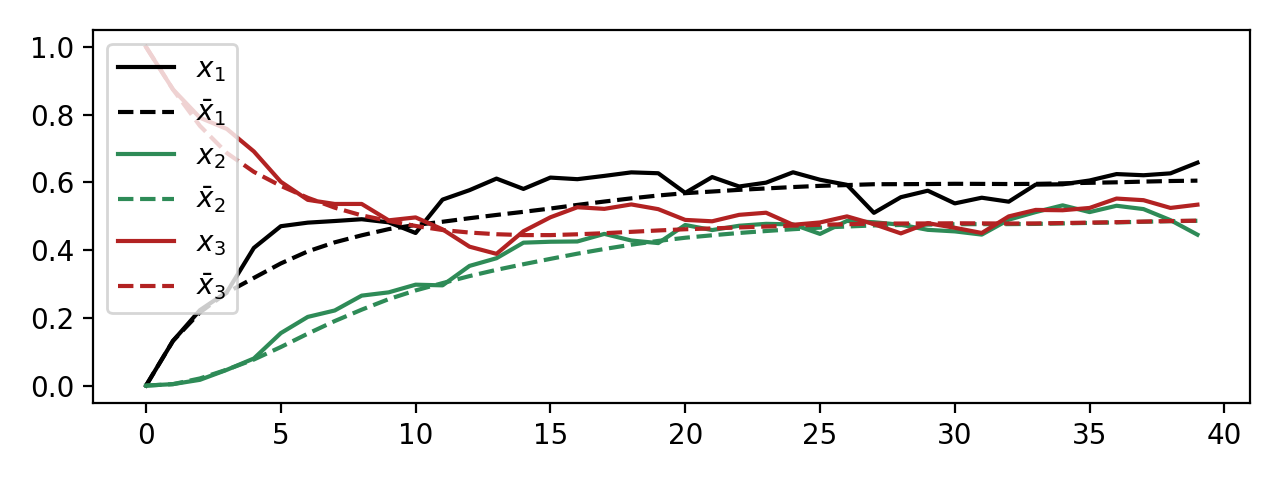}
    \caption{Simulation of the epidemic states of a homogeneous system with additive i.i.d. Gaussian noise and recovered states using the proposed online algorithm for learning the spread parameters, $\bar{x}$.}
    \label{fig:online_estimation}
\end{figure}

We now propose an online algorithm for learning the spread parameters from data, extending the ideas from Section \ref{sec:id}. The question becomes, if there is additive system noise and data is obtained in an online manner, how do estimates of the spread parameters improve as more data is added? The algorithm becomes the following: as more data is added, more rows of \eqref{eq:id1} are added. Then the spread parameters can be obtained by solving \eqref{eq:id1} at each time step, using least squares or by employing a recursive least squares method~\cite{aastrom2013adaptive}, to predict the next time step. 

A simulation based on the online algorithm for learning is shown in Figure~\ref{fig:online_estimation} with a single homogeneous virus. We set $h = 1$, $\delta = 0.9$, $\beta = 1.5$, $x[0] = \left[0 \; 0 \; 1 \right]$, and 

\vspace{-2ex}
\begin{align*}
    A &=\begin{bmatrix} 0 & 1 & 1 \\ 1 & 0 & 0 \\ 1 & 0 & 0 \end{bmatrix}.
\end{align*}

We generate the epidemic states, $x$, using (2) with additive i.i.d Gaussian noise with the standard deviation set to 0.03. By Figure~\ref{fig:online_estimation} we can see that the estimation is quite accurate using this online algorithm for learning, where $\bar{x}$ represents the estimated state. 
We can see that the new algorithm performs quite well, capturing the behavior of the system. We now apply these ideas to a USDA farm subsidy dataset.

\section{USDA Farm Subsidies as Competing Viruses}\label{sec:usda}

In the Food, Conservation and Energy Act of 2008 (2008 Farm Bill) a new subsidy program, ACRE, was introduced. It was an alternative to the exist CCP program. 
Similar to \cite{usda_acc,dtjournal} we aggregate farms on the county level. This approach allows us to convert the binary decision to enroll in ACRE or in CCP into a continuous measure of the proportion of eligible farms that enroll in ACRE or CCP, in each county. The proportion of farms enrolled in ACRE (and CCP) corresponds exactly to the density of the first virus (second virus), facilitating our investigation of the spread of the competing programs. The number of eligible farms in a county was set to the max number of farms enrolled in both programs in any year.  We removed counties where no farms were ever enrolled in either program. We also removed Alaska and Hawaii since they are not in the contiguous United States of America. The data for the four years considered can be found in Figures \ref{fig:data09}-\ref{fig:data12}.
Please see \cite{usda_acc,dtjournal} for more detailed information on the programs. 



\begin{figure}
  \begin{multicols}{2}
  \centering
  \begin{subfigure}{\linewidth}
    \includegraphics[trim=175 65 158 50, clip, width = .85\columnwidth]{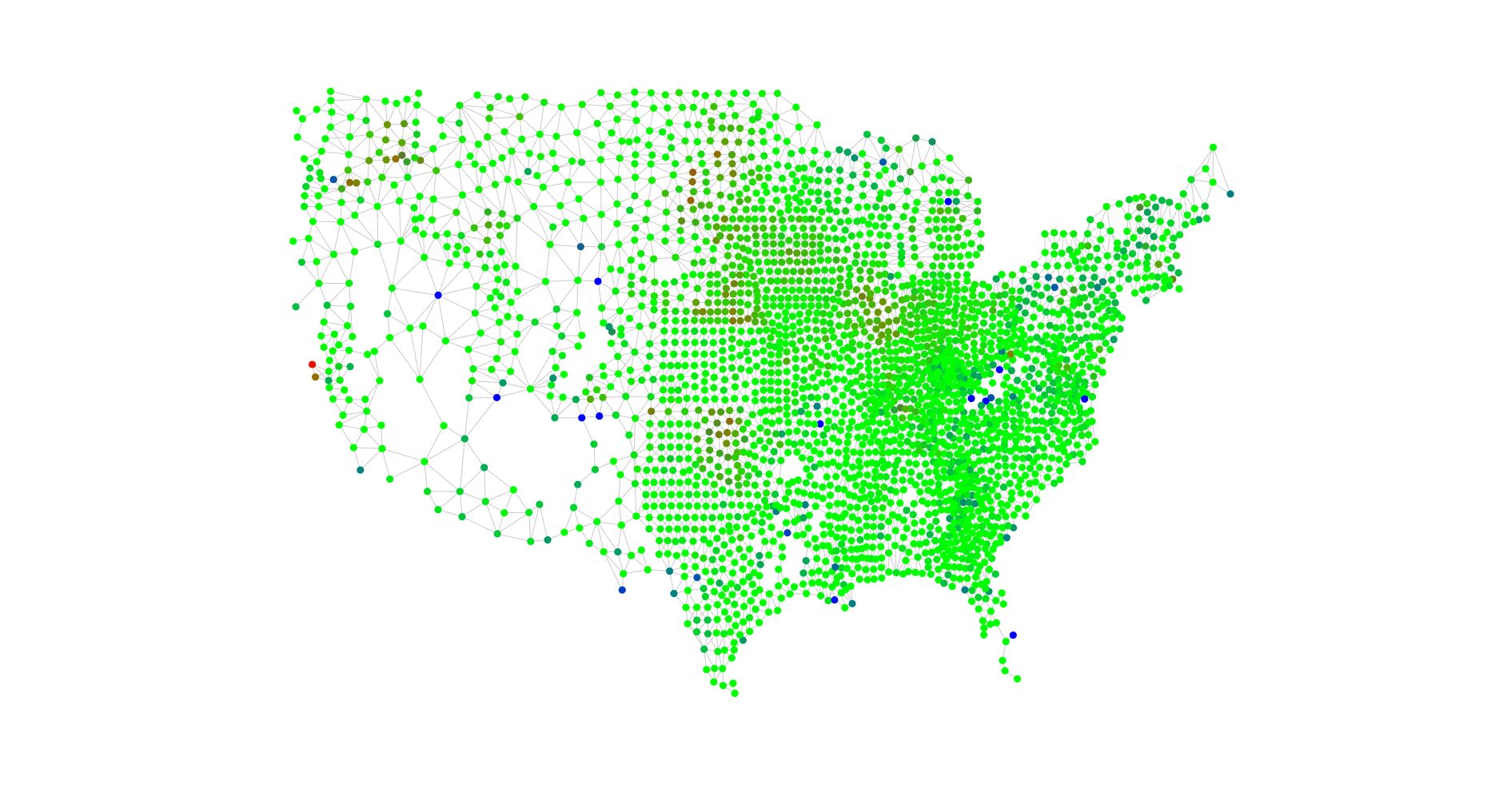}
    \caption{2009 Data}
    \label{fig:data09}
  \end{subfigure}
  \par
  \begin{subfigure}[b]{\columnwidth}
    \includegraphics[trim=175 65 158 50, clip, width = .85\columnwidth]{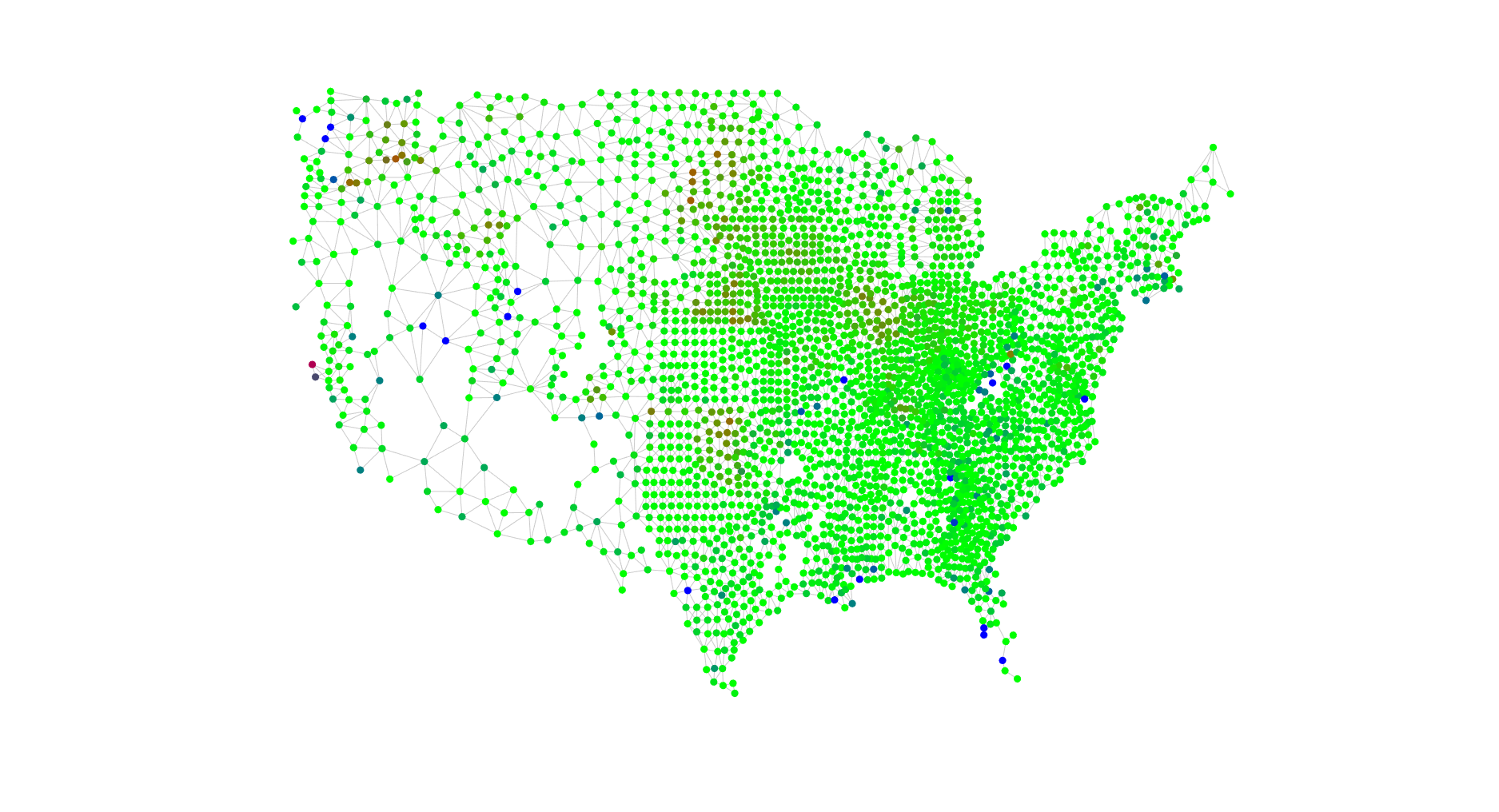}
    \caption{2010 Data}
    \label{fig:data10}
  \end{subfigure}
  \par
  \begin{subfigure}[b]{\columnwidth}
    \includegraphics[trim=175 65 158 50, clip, width = .85\columnwidth]{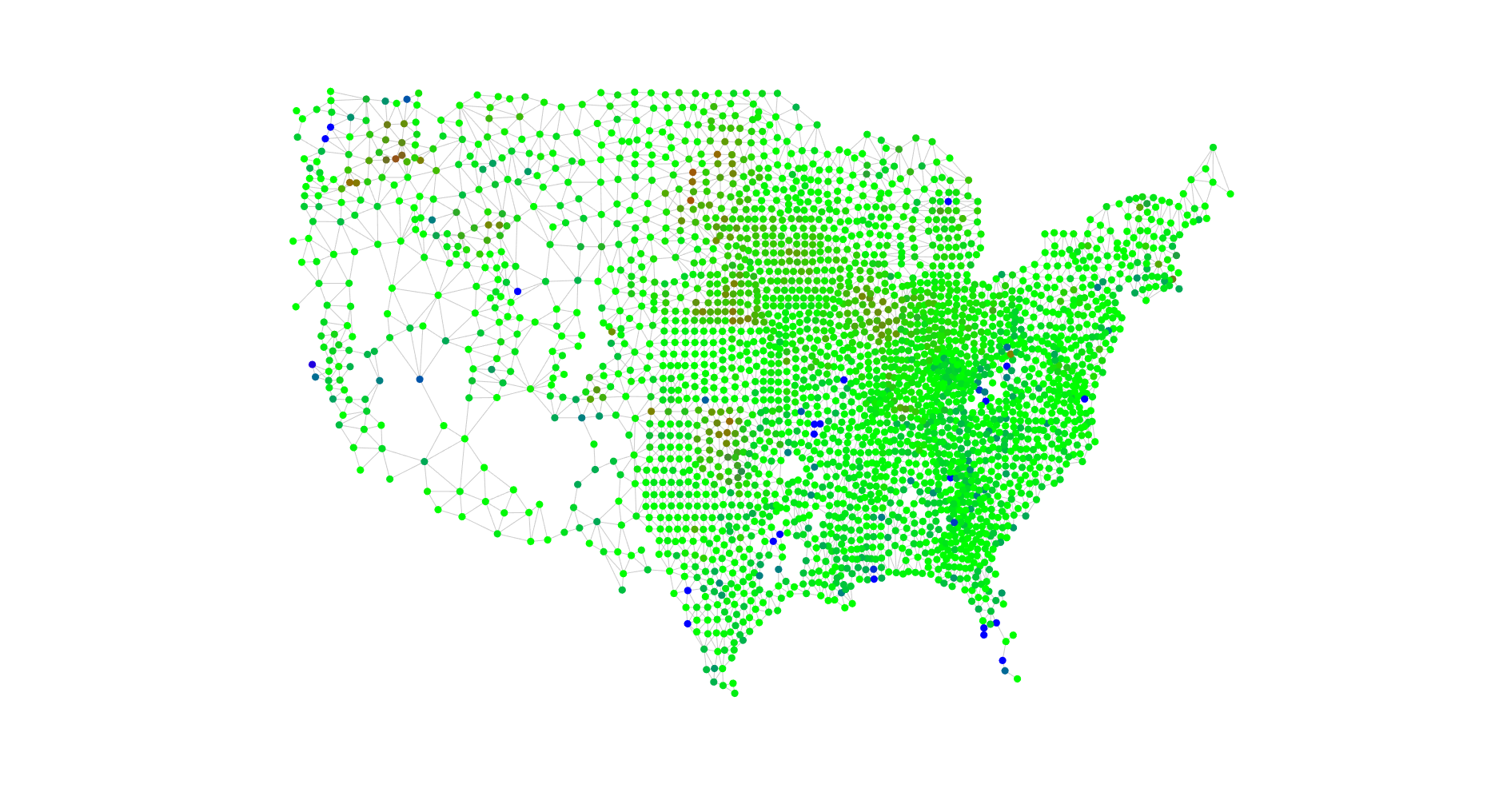}
    \caption{2011 Data}
    \label{fig:data11}
  \end{subfigure}
  \par
  \begin{subfigure}[b]{\columnwidth}
    \includegraphics[trim=175 65 158 50, clip, width = .85\columnwidth]{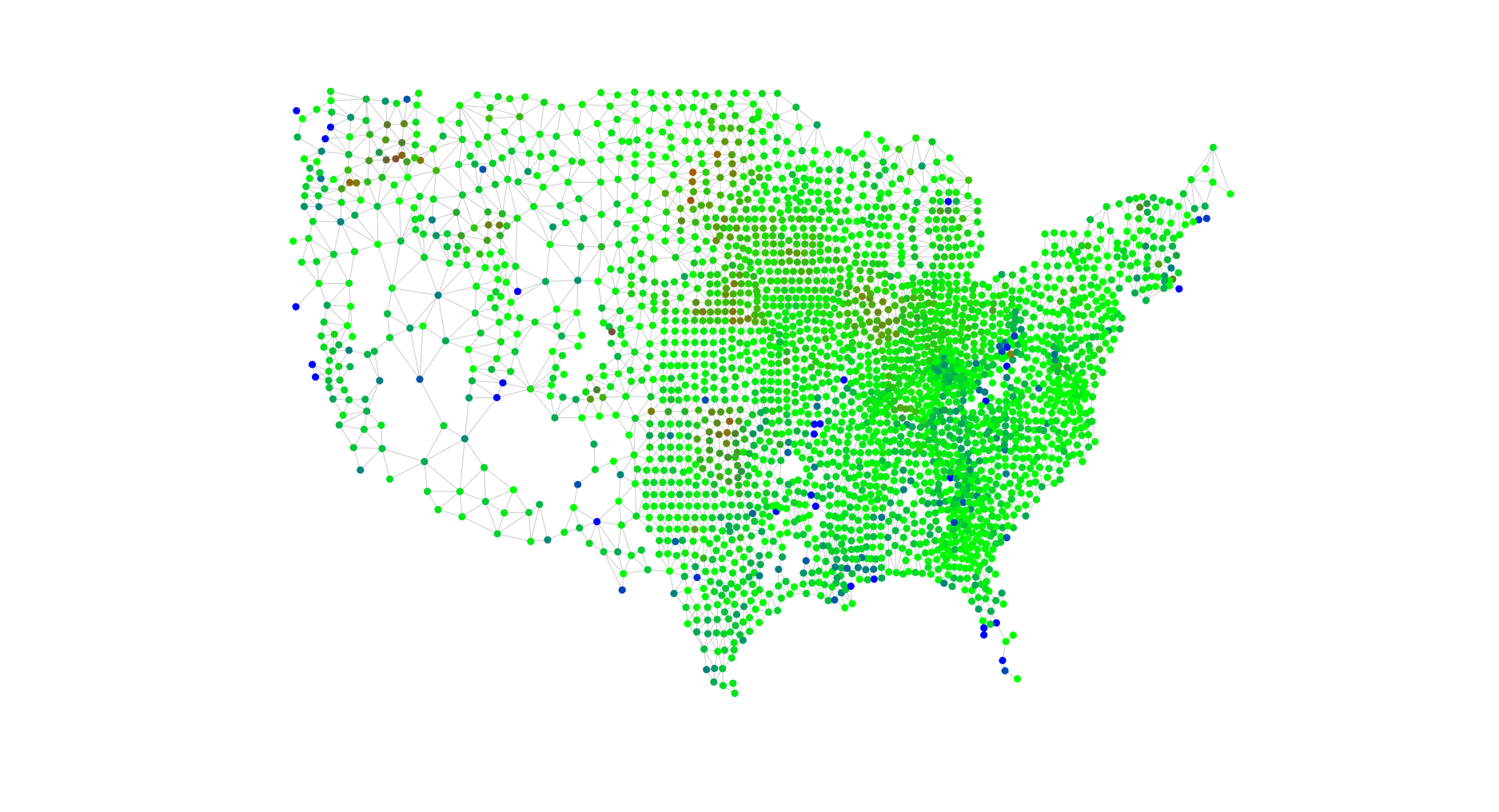}
    \caption{2012 Data}
    \label{fig:data12}
  \end{subfigure}
  \par
    \begin{subfigure}{\columnwidth}
      \includegraphics[trim=175 65 158 50, clip, width = .85\columnwidth]{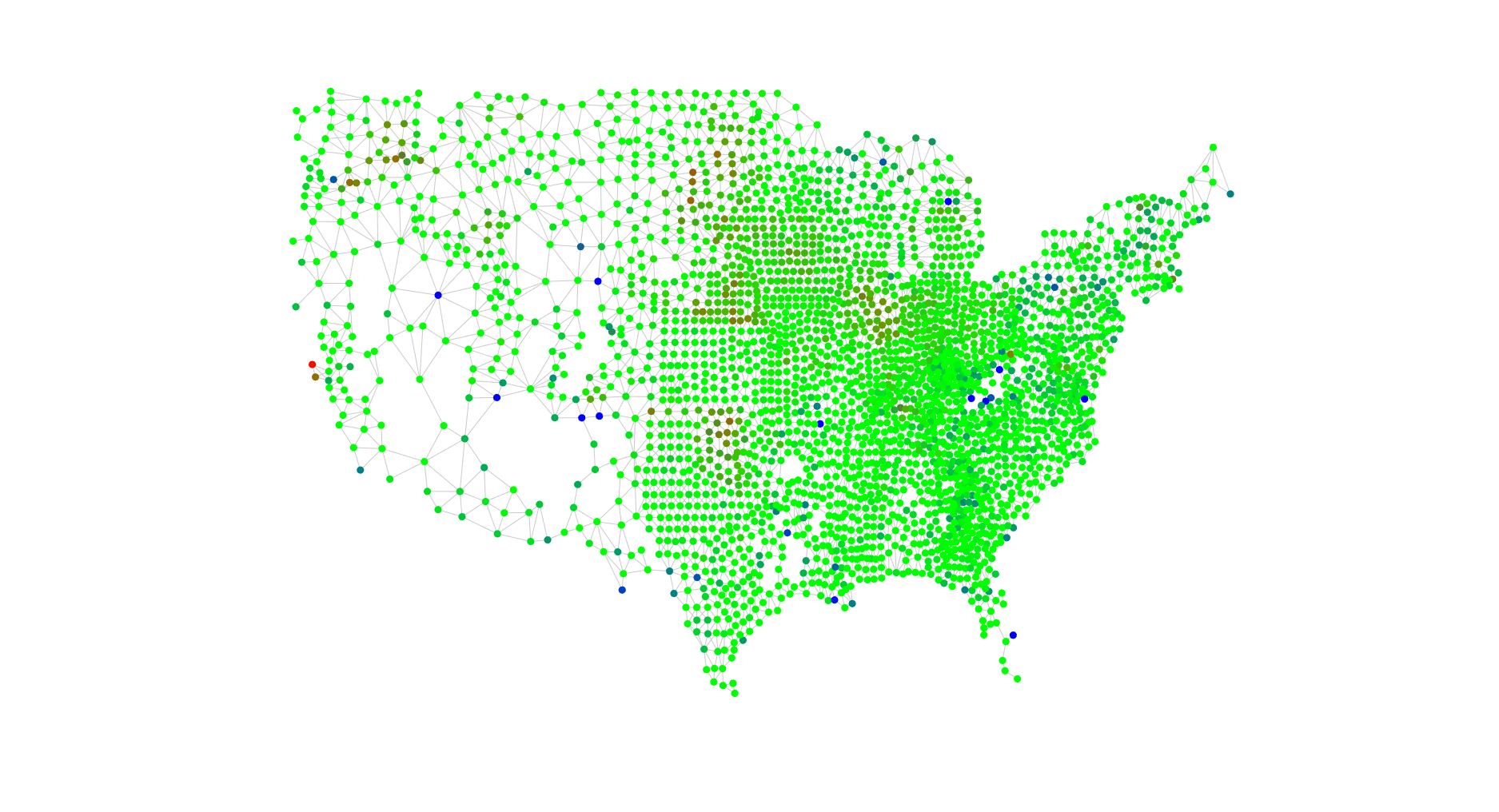}
      \caption{2009 Simulated Data}
      \label{fig:result09}
    \end{subfigure}
    \par
    \begin{subfigure}[b]{\columnwidth}
      \includegraphics[trim=175 65 158 50, clip, width = .85\columnwidth]{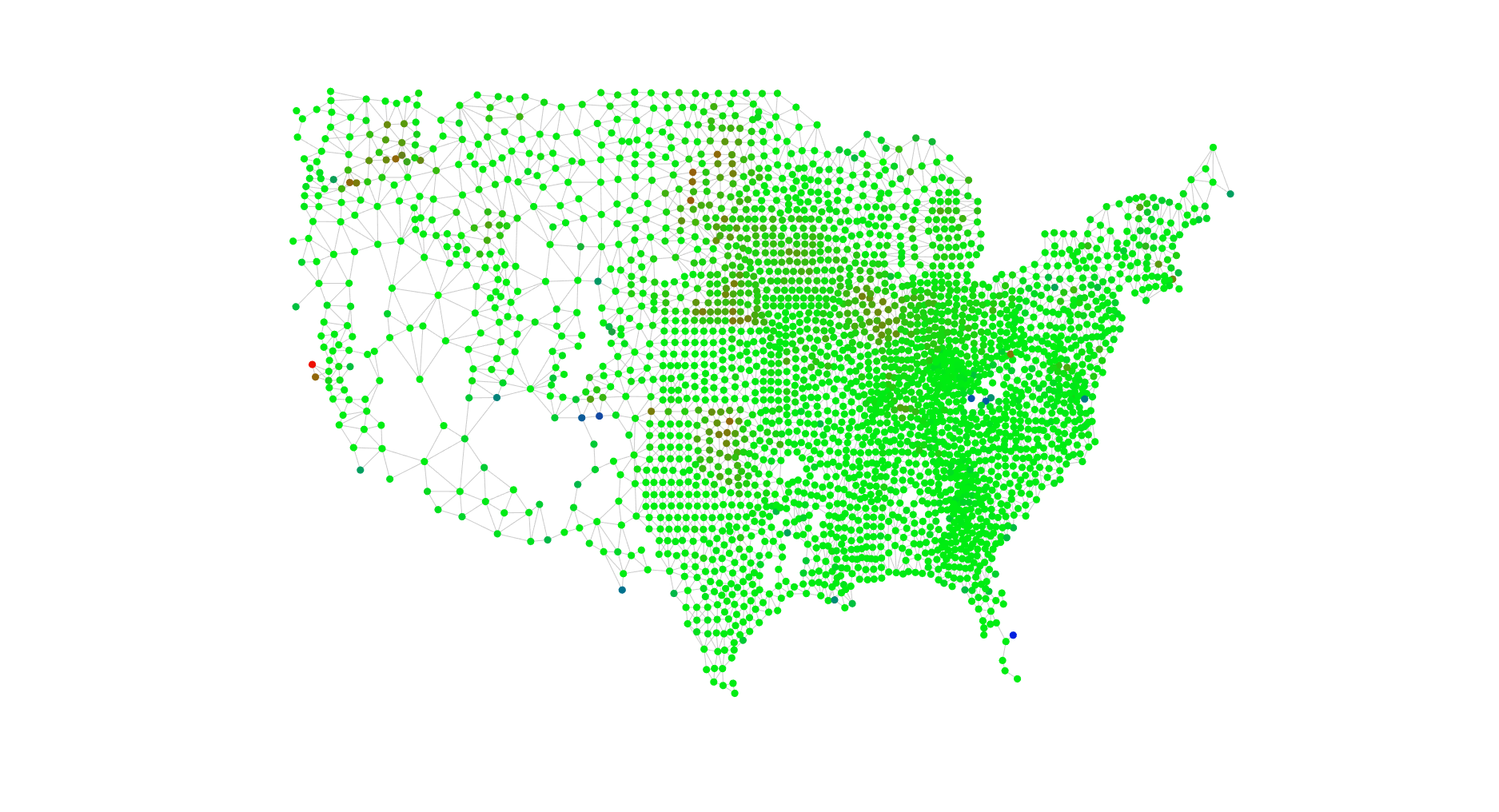}
      \caption{2010 Simulated Data}
      \label{fig:result10}
    \end{subfigure}
    \par
    \begin{subfigure}[b]{\columnwidth}
      \includegraphics[trim=175 65 158 50, clip, width = .85\columnwidth]{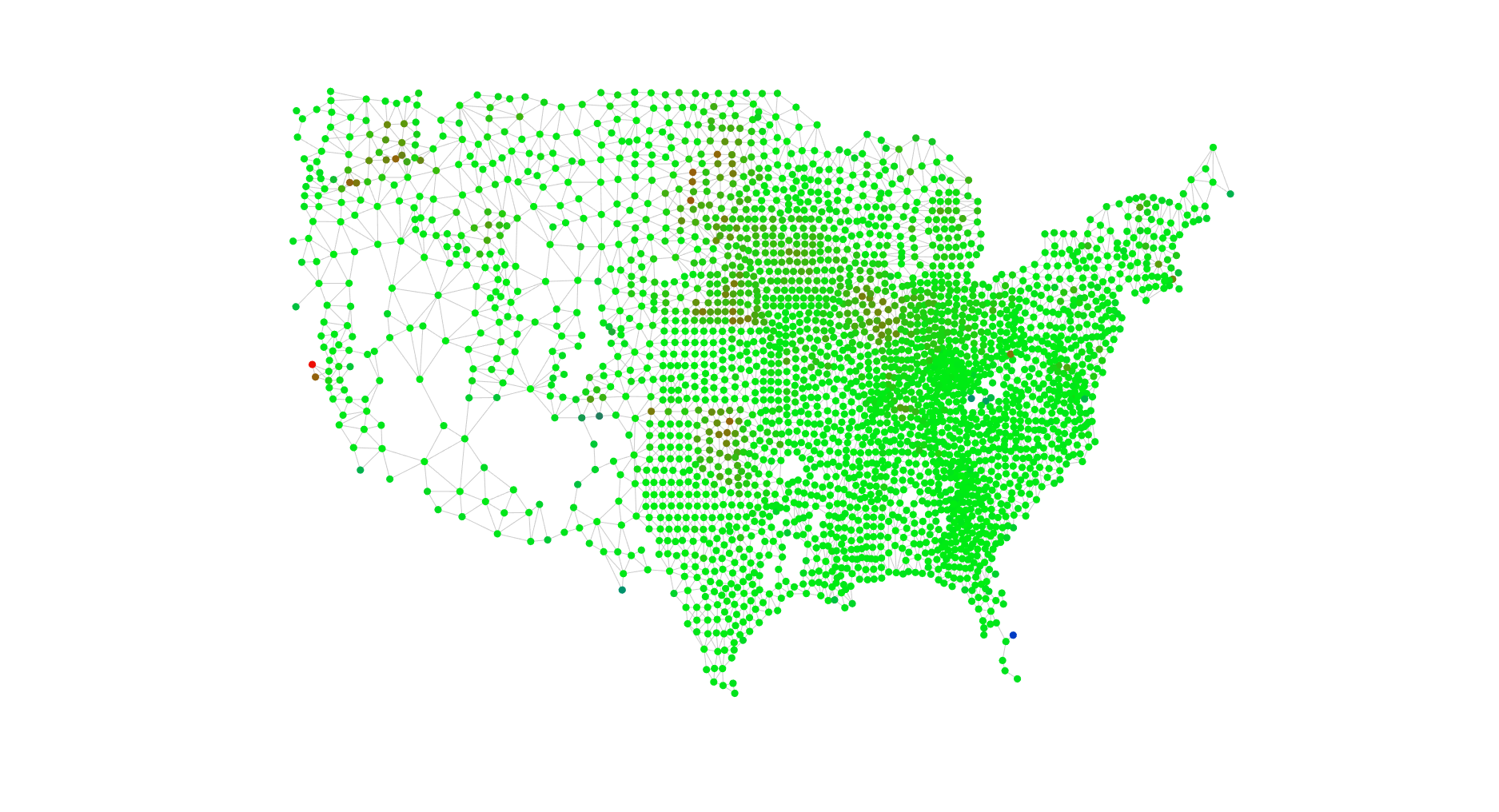}
      \caption{2011 Simulated Data}
      \label{fig:result11}
    \end{subfigure}
    \par
    \begin{subfigure}[b]{\columnwidth}
      \includegraphics[trim=175 65 158 50, clip, width = .85\columnwidth]{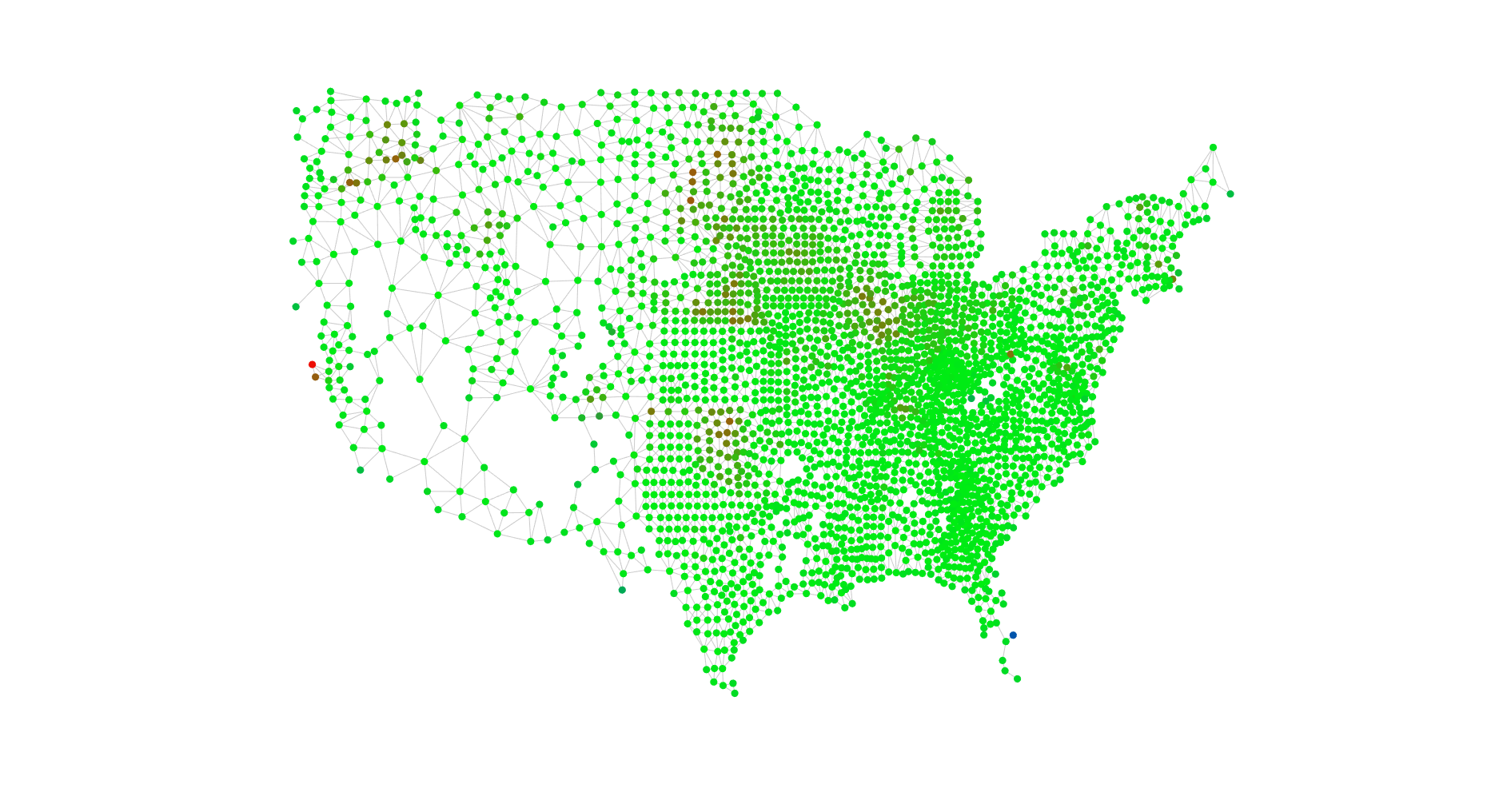}
      \caption{2012 Simulated Data}
      \label{fig:result12}
    \end{subfigure}
    \end{multicols}
  \caption{(Left) Calculated from the USDA dataset, the percentage of eligible farms enrolled in the ACRE Program, the CCP program, or neither are depicted in red, green, and blue, respectively. 
  (Right) Simulated data using Figure \ref{fig:data09} as the initial condition on the model in \eqref{eq:dis} with parameters calculated using the data from Kentucky, given in \eqref{eq:kentucky}. The colors of the nodes follow \eqref{eq:color}.}
  \label{fig:data}
\end{figure}

We now use the learning techniques presented in Section \ref{sec:id} and tested in Section \ref{sec:sim} for the model in \eqref{eq:dis} on the USDA dataset. 
The adjacency matrices are calculated using  the adjacency of counties, that is, 
\vspace{-1ex}
\begin{equation}
    a_{ij} = 
    \begin{cases}
    1, & \text{ if county }i \text{ and county } j \text{ share a border,}\\
    1, &\text{ if } i=j\text{,} \\
    0, & \text{ otherwise.}
    \end{cases}\label{eq:astate}
\end{equation} 


First we identify two sets of homogeneous spread parameters using the whole dataset by applying \eqref{eq:id1}:
\begin{equation}
     \begin{bmatrix}  \hat{\delta}^1_h \\ \hat{\beta}^{1}_h \end{bmatrix} = \begin{bmatrix}      
    0.0107 \\ 0.0139 \end{bmatrix} \text{ and } \begin{bmatrix}  \hat{\delta}^{2}_h \\ \hat{\beta}^2_h \end{bmatrix} = \begin{bmatrix}  0.0551 \\  0.0852 \end{bmatrix}.
    \label{eq:usa}
\end{equation}
We then simulate the model in \eqref{eq:dis} with the spread parameters in \eqref{eq:usa}, with the data from Figure \ref{fig:data09} being used as the initial condition. 
The resulting  scaled error between the dataset, $\mathbb{F}$, and the simulated data, $\hat{\mathbb{F}}_{\text{all}}$, using the Frobenius norm is 
\begin{equation*}
    \frac{\left\| \mathbb{F} - \hat{\mathbb{F}}_{\text{all}} \right\|_{F}}{\left\| \mathbb{F} \right\|_{F}} = \frac{12.0420}{96.8382} = 0.1244.
\end{equation*}

For completeness, similar to \cite{usda_acc,dtjournal} we use a subset of the dataset, the USDA data from Idaho, 
to recover the two sets of homogeneous model parameters and then simulate the spread of programs over the whole contiguous United States using the learned parameters. 
For calculating the adjacency matrix for Idaho, adjacent counties from bordering states were ignored. 
Applying \eqref{eq:id1} on the Idaho dataset gives the following spread parameters:
\vspace{-1ex}
\begin{equation}
     \begin{bmatrix}  \hat{\delta}^1_h \\ \hat{\beta}^1_h \end{bmatrix} = \begin{bmatrix}  -0.0332 \\ \ 0.0663 \end{bmatrix} \text{ and } \begin{bmatrix}  \hat{\delta}^2_h \\ \hat{\beta}^2_h \end{bmatrix} = \begin{bmatrix}  0.0503 \\   0.0345 \end{bmatrix}.
    \label{eq:idaho}
\end{equation}
Note that $\hat{\delta}^1_h$ for the first virus (the ACRE program) is negative, violating the assumptions of the model, which is not ideal.
Nevertheless for completeness, we simulate the spread over the contiguous United States using the model in \eqref{eq:dis} with the spread parameters calculated using the data from Idaho, given in \eqref{eq:idaho}, with the data from Figure \ref{fig:data09} being used as the initial condition. 
The scaled error between the dataset, $\mathbb{F}$, and this simulated data, $\hat{\mathbb{F}}_{\text{ID}}$, is 
\begin{equation*}
    \frac{\left\| \mathbb{F} - \hat{\mathbb{F}}_{\text{ID}} \right\|_{F}}{\left\| \mathbb{F} \right\|_{F}} = \frac{14.28}{96.8382} = 0.1348.
\end{equation*}
The scaled error from the analogous simulation in \cite{usda_acc,dtjournal} was 0.2348. Therefore it would appear that, while not a perfect fit, the competitive-virus model seems to capture the behavior of this USDA Farm Subsidy adoption dataset better than the single virus model. 

After testing every possible state, we found that the data from Kentucky provided the best estimate of the whole US data set when using the homogeneous version of the model in \eqref{eq:disM}. 
Applying \eqref{eq:id1} on the Kentucky dataset gives the following spread parameters:
\vspace{-1ex}
\begin{equation}
     \begin{bmatrix}  \hat{\delta}^1_h \\ \hat{\beta}^1_h \end{bmatrix} = \begin{bmatrix}  0.0044 \\ 0.1352 \end{bmatrix} \text{ and } \begin{bmatrix}  \hat{\delta}^2_h \\ \hat{\beta}^2_h \end{bmatrix} = \begin{bmatrix}  0.0702 \\   0.1272 \end{bmatrix}.
    \label{eq:kentucky}
\end{equation}
The simulated data can be found in Figures \ref{fig:result09}-\ref{fig:result12}. The resulting scaled error between the dataset, $\mathbb{F}$, and the simulated data, $\hat{\mathbb{F}}_{\text{KY}}$, is 
\begin{equation*}
    \frac{\left\| \mathbb{F} - \hat{\mathbb{F}}_{\text{KY}} \right\|_{F}}{\left\| \mathbb{F} \right\|_{F}} = \frac{12.2724}{96.8382} = 0.1230.
\end{equation*}


The results were improved upon when implementing the recursive algorithm proposed in Section~\ref{sec:exp}, reducing the scaled error to $0.0855$. However, it must be noted that the first two data points were included in the simulated data, 
since the recursive algorithm is only used for one step prediction. Using the first set of learned spread parameters for the second and third data points gave an error of 0.1140, still improving upon the previous results.

\section{Conclusion}\label{sec:con}

\vspace{-.5ex}

In this work we have proposed a discrete time competing virus model for an arbitrary number of viruses. We have provided conditions for the model to be well defined. 
We provided necessary and sufficient conditions for uniqueness of the healthy equilibrium. 
We presented necessary and sufficient conditions for learning spread parameters for competing viruses from data. We presented an interesting set of simulations that illustrate the analytic results and depict some characteristics of the model that warrant further study, 
and proposed an online spread parameter estimation algorithm. 
We employed a previously studied USDA dataset to validate the discrete-time two-competing virus, or bi-virus, case by modeling the spread of two alternative farm subsidy programs among farms aggregated by county, improving on previous work. 


\vspace{-2ex}

\bibliographystyle{IEEEtran}
\bibliography{IEEEabrv,bib}

\begin{thebibliography}{10}
\providecommand{\url}[1]{#1}
\csname url@samestyle\endcsname
\providecommand{\newblock}{\relax}
\providecommand{\bibinfo}[2]{#2}
\providecommand{\BIBentrySTDinterwordspacing}{\spaceskip=0pt\relax}
\providecommand{\BIBentryALTinterwordstretchfactor}{4}
\providecommand{\BIBentryALTinterwordspacing}{\spaceskip=\fontdimen2\font plus
\BIBentryALTinterwordstretchfactor\fontdimen3\font minus
  \fontdimen4\font\relax}
\providecommand{\BIBforeignlanguage}[2]{{%
\expandafter\ifx\csname l@#1\endcsname\relax
\typeout{** WARNING: IEEEtran.bst: No hyphenation pattern has been}%
\typeout{** loaded for the language `#1'. Using the pattern for}%
\typeout{** the default language instead.}%
\else
\language=\csname l@#1\endcsname
\fi
#2}}
\providecommand{\BIBdecl}{\relax}
\BIBdecl

\bibitem{allcott2017social}
H.~Allcott and M.~Gentzkow, ``Social media and fake news in the 2016
  election,'' \emph{Journal of Economic Perspectives}, vol.~31, no.~2, pp.
  211--36, 2017.

\bibitem{bastos2019brexit}
M.~T. Bastos and D.~Mercea, ``The {Brexit} botnet and user-generated
  hyperpartisan news,'' \emph{Social Science Computer Review}, vol.~37, no.~1,
  pp. 38--54, 2019.

\bibitem{nowak1991evolution}
M.~Nowak, ``The evolution of viruses. competition between horizontal and
  vertical transmission of mobile genes,'' \emph{Journal of Theoretical
  Biology}, vol. 150, no.~3, pp. 339--347, 1991.

\bibitem{sahneh2014competitive}
F.~D. Sahneh and C.~Scoglio, ``Competitive epidemic spreading over arbitrary
  multilayer networks,'' \emph{Physical Review E}, vol.~89, no.~6, p. 062817,
  2014.

\bibitem{karrer2011competing}
B.~Karrer and M.~E.~J. Newman, ``Competing epidemics on complex networks,''
  \emph{Physical Review E}, vol.~84, no.~3, p. 036106, 2011.

\bibitem{prakash2012winner}
B.~A. Prakash, A.~Beutel, R.~Rosenfeld, and C.~Faloutsos, ``Winner takes all:
  competing viruses or ideas on fair-play networks,'' in \emph{Proceedings of
  the 21st International Conference on World Wide Web}.\hskip 1em plus 0.5em
  minus 0.4em\relax ACM, 2012.

\bibitem{wei2013competing}
X.~Wei, N.~C. Valler, B.~A. Prakash, I.~Neamtiu, M.~Faloutsos, and
  C.~Faloutsos, ``Competing memes propagation on networks: A network science
  perspective,'' \emph{IEEE Journal on Selected Areas in Communications},
  vol.~31, no.~6, pp. 1049--1060, 2013.

\bibitem{santos2015bivirus}
A.~Santos, J.~Moura, and J.~Xavier, ``Bi-virus {SIS} epidemics over networks:
  Qualitative analysis,'' \emph{IEEE Transactions on Network Science and
  Engineering}, vol.~2, no.~1, pp. 17--29, Jan 2015.

\bibitem{liu2016onthe}
J.~Liu, P.~E. Par\'{e}, , A.~Nedi\'{c}, C.~Y. Tang, C.~L. Beck, and
  T.~Ba\c{s}ar, ``On the analysis of a continuous-time bi-virus model,'' in
  \emph{Proceedings of the 55th IEEE Conference on Decision and Control (CDC)},
  2016.

\bibitem{watkins2016optimal}
N.~J. Watkins, C.~Nowzari, V.~M. Preciado, and G.~J. Pappas, ``Optimal resource
  allocation for competitive spreading processes on bilayer networks,''
  \emph{IEEE Transactions on Control of Network Systems}, vol.~5, no.~1, pp.
  298--307, 2016.

\bibitem{bivirusTAC}
J.~Liu, P.~E. Par\'{e}, A.~Nedi\'{c}, C.~T. Tang, C.~L. Beck, and T.~Ba\c{s}ar,
  ``Analysis and control of a continuous-time bi-virus model,'' \emph{IEEE
  Transactions on Automatic Control}, 2019, to appear.

\bibitem{xu2012multi}
S.~Xu, W.~Lu, and Z.~Zhan, ``A stochastic model of multivirus dynamics,''
  \emph{IEEE Transactions on Dependable and Secure Computing}, vol.~9, no.~1,
  pp. 30--45, 2012.

\bibitem{acc_multi}
P.~E. Par\'{e}, J.~Liu, C.~L. Beck, A.~Nedi\'{c}, and T.~Ba\c{s}ar,
  ``Multi-competitive viruses over static and time--varying networks,'' in
  \emph{Proceedings of the American Control Conference}, 2017, pp. 1685--1690.

\bibitem{tnse_multi}
------, ``Multi-competitive viruses over time--varying networks with mutations
  and human awareness,'' \emph{{\em under review for} IFAC Automatica}, 2019.

\bibitem{wang2003epidemic}
Y.~Wang, D.~Chakrabarti, C.~Wang, and C.~Faloutsos, ``Epidemic spreading in
  real networks: An eigenvalue viewpoint,'' in \emph{Proceedings of the 22nd
  International Symposium on Reliable Distributed Systems}.\hskip 1em plus
  0.5em minus 0.4em\relax IEEE, 2003, pp. 25--34.

\bibitem{chakrabarti2008epidemic}
D.~Chakrabarti, Y.~Wang, C.~Wang, J.~Leskovec, and C.~Faloutsos, ``Epidemic
  thresholds in real networks,'' \emph{ACM Transactions on Information and
  System Security (TISSEC)}, vol.~10, no.~4, p.~1, 2008.

\bibitem{ahn2013global}
H.~J. Ahn and B.~Hassibi, ``Global dynamics of epidemic spread over complex
  networks,'' in \emph{Proceedings of the 52nd IEEE Conference on Decision and
  Control (CDC)}.\hskip 1em plus 0.5em minus 0.4em\relax IEEE, 2013, pp.
  4579--4585.

\bibitem{usda_acc}
P.~E. Par\'{e}, B.~E. Kirwan, J.~Liu, T.~Ba\c{s}ar, and C.~L. Beck,
  ``Discrete-time spread processes: Analysis, identification, and validation,''
  in \emph{Proceedings of the American Control Conference}, 2018.

\bibitem{dtjournal}
P.~E. Par\'{e}, J.~Liu, C.~L. Beck, B.~E. Kirwan, , and T.~Ba\c{s}ar,
  ``Discrete time virus spread processes: Analysis, identification, and
  validation,'' \emph{IEEE Transactions on Control Systems Technology: System
  Identification and Control in Biomedical Applications}, vol.~28, no.~1, pp.
  79--93, 2019.

\bibitem{arturo}
A.~Melo, C.~L. Beck, J.~I. Pena, and P.~E. Par\'{e}, ``Knowledge transfer from
  universities to regions as a network spreading process,'' in
  \emph{Proceedings of the IEEE International Systems Engineering Symposium
  (ISSE)}, 2018.

\bibitem{prasse2018network}
B.~Prasse and P.~Van~Mieghem, ``Network reconstruction and prediction of
  epidemic outbreaks for {NIMFA} processes,'' \emph{arXiv preprint
  arXiv:1811.06741}, 2018.

\bibitem{prasse2019viral}
------, ``The viral state dynamics of the discrete-time {NIMFA} epidemic
  model,'' \emph{arXiv preprint arXiv:1903.08027}, 2019.

\bibitem{atkinson2008introduction}
K.~E. Atkinson, \emph{An introduction to numerical analysis}.\hskip 1em plus
  0.5em minus 0.4em\relax John Wiley \& Sons, 2008.

\bibitem{rantzer2011positive}
A.~Rantzer, ``Distributed control of positive systems,'' in \emph{Proceedings
  of the 50th IEEE Conference on Decision and Control}, 2011, pp. 6608--6611.

\bibitem{horn2012matrix}
R.~A. Horn and C.~R. Johnson, \emph{Matrix analysis}.\hskip 1em plus 0.5em
  minus 0.4em\relax Cambridge University Press, 2012.

\bibitem{aastrom2013adaptive}
K.~J. {\AA}str{\"o}m and B.~Wittenmark, \emph{Adaptive control}.\hskip 1em plus
  0.5em minus 0.4em\relax Courier Corporation, 2013.

\end{thebibliography}

\end{document}